\author{Arash Mohammadi,  \textit{Student Member, IEEE}
and Amir Asif, \textit{Senior Member, IEEE} \\
Computer Science and Engineering, York University,
Toronto, ON, Canada M3J 1P3\\
Emails: $\{$marash, asif$\}$@cse.yorku.ca,
Ph: (416) 736-2100 Ext. 70128, Fax: (416) 736-5937}
\title{Distributed Particle Filter Implementation with Intermittent/Irregular Consensus Convergence}
\def\x{\bm{x}}
\def\f{\bm{f}}
\def\g{\bm{g}}
\def\X{\bm{X}}
\def\z{\bm{z}}
\def\cx{{{\mathbb{X}}}}
\def\GF{\text{FF}}
\def\MGF{\text{MFF}}
\def\asym{\text{asym}}
\def\ggg{\text{FF}}
\newtheorem{thm}{Theorem}
\newtheorem{prop}{Proposition}
\algnewcommand\Input{\item[\hspace{6pt}\textbf{Input:}]}
\algnewcommand\Output{\item[\hspace{6pt}\textbf{Output:}]}
\algnewcommand\OutputVal{\textbf{output} }
\begin{document}

\date{\today}
\maketitle
\thispagestyle{empty}

\begin{abstract}
Motivated by non-linear, non-Gaussian, distributed multi-sensor/agent navigation and tracking applications, we propose a multi-rate consensus/fusion based framework for distributed implementation of the particle filter (CF/DPF). The CF/DPF framework is based on running localized particle filters to estimate the overall state vector at each observation node. Separate fusion filters are designed to consistently assimilate the local filtering distributions into the global posterior by compensating for the common past information between neighbouring nodes.
The CF/DPF offers two distinct advantages over its counterparts. First, the CF/DPF framework is suitable for scenarios where network connectivity is intermittent and consensus can not be reached between two consecutive observations. Second, the CF/DPF is not limited to the Gaussian approximation for the global posterior density.
A third contribution of the paper is the derivation of
the exact expression for computing the
posterior Cram\'er-Rao lower bound~(PCRLB) for the distributed architecture based on a recursive procedure
involving the local Fisher information matrices (FIM) of the distributed estimators.
The performance of the CF/DPF algorithm~closely follows the centralized particle filter approaching the PCRLB at the signal to noise ratios that we tested.
\end{abstract}
\textbf{IEEEkeywords:} Consensus algorithms, Data fusion, Distributed estimation, Multi-sensor tracking, Non-linear systems, and Particle filters.
\ifCLASSOPTIONpeerreview
\begin{center} \bfseries EDICS Category: SSP-FILT; SSP-NGAU;SSP-PERF; SEN-FUSE; SEN-DIST; SSP-NPAR; SSP-TRAC \end{center}
\fi
%OOOOOOOOOOOOOOOOOOOOOOOOOOOOOOOOOOOOOOOOOOOOOOOOOOOOOOOOO
\section{Introduction} \label{sec:Introduction}
%OOOOOOOOOOOOOOOOOOOOOOOOOOOOOOOOOOOOOOOOOOOOOOOOOOOOOOOOO

\IEEEPARstart{T}{he} paper focuses on distributed estimation and tracking algorithms
for non-linear, non-Gaussian, data fusion problems in networked systems. Distributed state estimation has been the~center of
attention recently both for
linear~\cite{Olfati:3}--\nocite{Kar:2011,Alriksson:2006,Olfati:track,Olfati:4}\cite{Dimakis:2010}
and non-linear
systems~\cite{Challa:2002}--\nocite{Rosencrantz:2003,Rigatos:2010,  coat:2003,sheng:2005,dpd:durant,CDPF:Durant,DongbingGu:2007,consdpf:1,Coates:2010,Ustebay:2011,Bolic:2005,Balasingam:2011,Arash:2009,Simonetto:2010,Huang:2008,Arash:ssp2011, Arash:campsap2011, Farahmand:2010, Arash:Icc2012, Lee:2009, Hlinka:2010}\cite{Liu:2009}
with widespread applications such as autonomous navigation of unmanned aerial vehicles (UAV)~\cite{Rigatos:2010}, localization in robotics~\cite{Simonetto:2010}, tracking/localization in underwater sensor networks~\cite{Huang:2008}, distributed state estimation  for power distribution networks~\cite{Arash:2009}, and bearings-only target tracking~\cite{Bolic:2005}.
A major problem in distributed estimation networks is unreliable communication (especially in large and
multi-hop networks), which results in communication delays and
 information loss. Referred to as intermittent network connectivity~\cite{Sinopoli:2004,  Kar:2011int}, this issue has been investigated broadly in the context of the Kalman filter~\cite{Sinopoli:2004,  Kar:2011int}. Such methods are, however,~limited to linear systems and have not yet  been extended to non-linear systems. The paper addresses this gap.

\noindent
\textbf{Distributed Estimation:} Traditionally, state estimation algorithms have been largely centralized with
participating nodes communicating their raw observations (either
directly or indirectly via a multi-hop relay) to the fusion centre: a central processing
unit responsible for computing the overall estimate. Although optimal,
such centralized approaches are unscalable and susceptible to failure in case the
fusion centre breaks down.
The alternative is to apply  distributed estimation algorithms, where: (i)~There is no fusion center;
(ii)~The sensor nodes do not require global knowledge of the network
topology, and; (iii)~Each node exchanges data only within its immediate neighbourhood limited to a few local nodes.
The distributed estimation approaches fall under two
main categories:~Message passing
schemes~\cite{coat:2003,sheng:2005}, where
information flows in a sequential, \textit{pre-defined} manner from a node to
one of its neighboring nodes via a cyclic path till the entire network
is traversed, and;~Diffusive schemes~\cite{Arash:2009}--\nocite{Simonetto:2010,Huang:2008}\cite{Bolic:2005},~\cite{dpd:durant}--\nocite{CDPF:Durant,DongbingGu:2007,consdpf:1,Coates:2010,Ustebay:2011,Balasingam:2011,Farahmand:2010}\cite{Arash:Icc2012}, where each node communicates its
local information  by interacting with its immediate
neighbors.  In dynamical environments with intermittent connectivity, where frequent changes in the
underlying network topology due to mobility, node failure, and link
failure are a common practice, diffusive approaches
significantly improve the robustness at the cost of an increased
communications overhead.
In diffusive schemes, the type of information communicated across the network varies from local observations, their likelihoods, or some other function of local observations~\cite{coat:2003, DongbingGu:2007, Ustebay:2011, Farahmand:2010, Arash:Icc2012, Hlinka:2010}, to state posterior/filtering estimates evaluated at individual nodes~\cite{sheng:2005, dpd:durant, consdpf:1,Coates:2010, Arash:ssp2011, Arash:campsap2011}.
Communicating state posteriors  is advantageous over sharing likelihoods in applications with intermittent connectivity. In theory, any loss of  information in error prone networks
is contained in the following posteriors and is, therefore, automatically compensated for as the distributed algorithms iterate.
A drawback of communicating the local state estimates stems from their correlated nature~\cite{Grime:1994}.
Channel filters~\cite{Grime:1994} and their non-linear extensions~\cite{dpd:durant} (proposed to ensure consistency of the fused estimates by removing this redundant past information) associate an additional filter to each communication link and track the redundant information between a pair of  neighbouring nodes' local estimates. However, channel filters are limited to tree-connected topologies and can not be generalized to random/mobile networks.
Alternatively, consensus-based approaches have been recently introduced to extend distributed estimation to arbitrary network topologies
with the added advantage that the algorithm is somewhat immune to node and/or communication failures~\cite{Olfati:4, Dimakis:2010}.
The consensus-based\footnote{Consensus in distributed filtering is the process of establishing a consistent value for some statistics of the state vector across the network by interchanging relevant information between the connected neighboring nodes.} distributed Kalman filter~\cite{Olfati:3}--\nocite{Kar:2011, Alriksson:2006, Olfati:track,Olfati:4}\cite{Dimakis:2010} have been widely explored for estimation/tracking
problems in linear systems with intermittent connectivity but there is still a need for developing distributed estimation
approaches for non-linear systems.
In addition, the current non-linear consensus-based distributed approaches suffer from three drawbacks.
First, the common practice~\cite{DongbingGu:2007}--\nocite{consdpf:1}\cite{Coates:2010}\cite{Arash:ssp2011} of limiting the global posterior to Gaussian distribution is sub-optimal.
Second, common past information between neighbouring nodes gets incorporated multiple times~\cite{consdpf:1}.
Finally,  these approaches~\cite{DongbingGu:2007}--\nocite{consdpf:1, Coates:2010, DongbingGu:2007, Ustebay:2011,Farahmand:2010}\cite{Arash:Icc2012}, require the consensus algorithm to converge between two successive observations (thus  ignoring the intermittent communication connectivity issue in the observation framework).
The performance of the distributed approaches degrade substantially if consensus is not reached within two consecutive observations.

Motivated by distributed navigation and tracking applications
within large networked systems,
we propose
a multi-rate framework for distributed implementation of the particle filter.
The proposed framework is suitable for scenarios where the network connectivity is intermittent and consensus can not be reached between two observations. Below, we summarize the key contributions of the paper.

\noindent
\textbf{1. Fusion filter:}
The paper proposes a consensus/fusion based distributed implementation
of the particle filter (CF/DPF) for non-linear systems with
non-Gaussian excitation. In addition to the localized particle filters, referred to as the local filters, the CF/DPF introduces separate
consensus-based filters, referred to as the fusion filters (one per sensor node), to derive the global
posterior distribution by consistently fusing local filtering densities in a distributed fashion.
The localized implementation of the particle filter and the fusion filter used to achieve consensus are run in parallel, possibly at  different rates. Achieving consensus between two successive iterations of the local filters is, therefore, no longer a requirement.
The CF/DPF compensates for the common past information between local
estimates based on an optimal non-linear Bayesian fusion rule~\cite{Chong:1990}.
The fusion concept used in the CF/DPF is similar to~\cite{Grime:1994}
and~\cite{dpd:durant}, where separate channel filters (one for each
communication link) are deployed to consistently fuse local estimates.
Fig.~\ref{g1} compares the proposed CF/DPF framework with channel filter framework and the centralized Architecture.
In the channel filter framework (Fig.~\ref{g1}(c)), the number of channel filters implemented at each node
equals the number of connections it has with its neighbouring nodes and, therefore, varies from one node to another.
These filters are in addition to the localized filters run at nodes.
In the CF/DPF (Fig.~\ref{g1}(a)) each node only implements one additional filter irrespective of the~neighbouring connections.
Finally, the tree-connect network shown in~Fig.~\ref{g1}(c) can not be extended to any arbitrary network, for example the one  shown in Fig.~\ref{g1}(a). The CF/DPF is applicable to any network~configuration.
\begin{figure}
\centerline{
\includegraphics[scale=0.4]{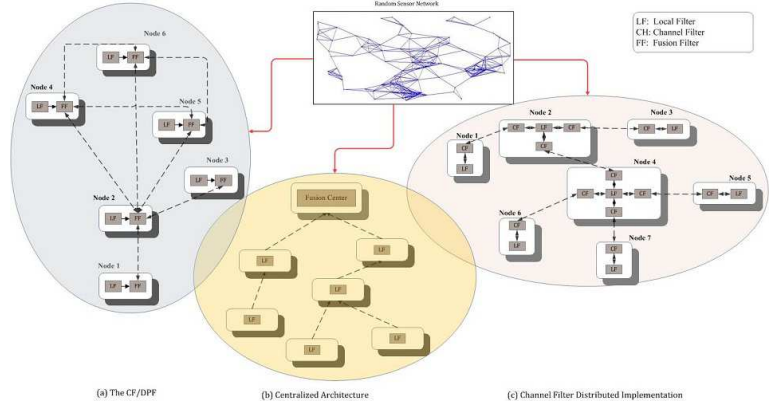}}
\caption{(a) The proposed CF/DPF implementation where sensor nodes connect through their fusion filters (one fusion filter per node). (b) Centralized implementation where all nodes communicate their local estimates to the fusion center. (c) Distributed implementation using channel filters where a separate filter is required for each communication link. In terms of the number of extra filters, the CF/DPF falls between the Centralized and channel~filters.}
\label{g1}
\end{figure}

\noindent
\textbf{2. Modified Fusion filters:}
In the CF/DPF, the fusion filters can run at a rate different form that of the local filters. We further investigate this multi-rate nature of the proposed framework,
recognize three different scenarios, and describe how the CF/DPF handles each of them.
For the worse-case scenario with the fusion filters lagging the local filters exponentially,
we derive a modified-fusion filter algorithm that limits the lag to an affordable delay.

\noindent
\textbf{3. Computing Posterior Cram\'er-Rao Lower Bound:}
In order to evaluate the performance of the proposed \textit{distributed}, \textit{non-linear} framework, we derive the posterior Cram\'er-Rao lower bound (PCRLB), (also referred in literature as the
Bayesian CRLB) for the distributed architecture.
The current PCRLB approaches~\cite{Tichavsky:1998, Hernandez:2002, Zuo:2011} assume a centralized
architecture or a hierarchical architecture~\cite{Tharmarasa:2009}.
 The exact expression for computing the
PCRLB for the distributed architecture
is not yet available and only~an approximate
expression~\cite{Tharmarasa:2011} has recently been derived.
The paper derives the exact expression for computing the
PCRLB for the distributed architecture.
Following Tichavsky \textit{et al.}~\cite{Tichavsky:1998},
we provide a Riccati-type recursion that sequentially determines the
exact FIM from localized FIMs of the distributed estimator.

The rest of paper is organized as follows. Section~\ref{sec:background} introduces notation and reviews the centralized particle filter as well as the average consensus approaches. The proposed CF/DPF algorithm and the fusion filter are described in Section~\ref{sec:gcf/dpf}. The modified fusion filter is presented in
Section~\ref{sec:mff}. Section~\ref{sec:pcrlb} derives an expression for computing the PCRLB for a distributed architecture. Section~\ref{sec:simu} illustrates the effectiveness of the proposed framework in tracking applications
through Monte Carlo simulations. Finally, in Section~\ref{sec:conclusion}, we conclude the paper.
%

%=================================================
\section{BACKGROUND} \label{sec:background}
%=================================================
We consider a sensor network comprising of $N$ nodes observing a set
of $n_x$ state variables $\x = [ X_1, X_2, \ldots , X_{n_x}]^T$. For
($1 \leq l \leq N$), node $l$ makes a measurement $\z^{(l)}(k)$ at
discrete time instants $k$, ($1 \leq k$).  The global observation
vector is given by $\z = [\z^{(1)T}, \ldots, \z^{(N)T}]^T$, where $T$
denotes transposition. The overall state-space representation of the
dynamical system is given by
\begin{eqnarray} \label{sec.3:eq.1.1}
\mbox{State Model:}~~~~~~~~~~~~~~~~~~~~
\x(k) &=& \f(\x(k-1)) + \bm{\xi}(k)\\
\mbox{Observation Model: }
\underbrace{\left [
\begin{array}{c}
\z^{(1)}(k) \\
\vdots \\
\z^{(N)}(k)
\end{array} \right]}_{\z(k)} &=&
\underbrace{\left [
\begin{array}{c}
\bm{g}^{(1)}(\x(k)) \\
\vdots \\
\bm{g}^{(N)}(\x(k))
\end{array} \right ]}_{\g(\x(k))} +
\underbrace{\left [
\begin{array}{c}
\bm{\zeta}^{(1)}(k) \\
\vdots \\
\bm{\zeta}^{(N)}(k)
\end{array} \right],}_{\bm{\zeta}(k)}~~~~~\label{sec.3:eq.1.2}
\end{eqnarray}
where $\bm{\xi}(\cdot)$ and $\bm{\zeta}(\cdot)$ are, respectively, the
global uncertainties in the process and observation models.  Unlike
the Kalman filter, the state and observation functions $\f(\cdot)$ and
$\g(\cdot)$ can possibly be nonlinear, and vectors $\bm{\xi}(\cdot)$
and $\bm{\zeta}(\cdot)$ are not necessarily restricted to white
Gaussian noise.

The optimal Bayesian filtering recursion for iteration $k$ is given by
\begin{eqnarray}\label{ar:b11}
P(\x(k)|\z(1\!:\!k\!-\!1)) &=&  \int P(\x(k\!-\!1)|\z(1\!:\!k\!-\!1))f(\x(k)|\x(k\!-\!1))d\x(k\!-\!1)\\
\mbox{and} ~ P(\x(k)|\z(1\!:\!k)) &=& \frac{P(\z(k)|\x(k))P(\x(k)|\z(1\!:\!k\!-\!1))}{P(\z(k)|\z(1\!:\!k\!-\!1))}.\label{ar:b12}
\end{eqnarray}
The particle filter is based on the principle of sequential importance
sampling~\cite{Tutorial:PF, Bearing:gordon}, a suboptimal technique for
implementing recursive Bayesian estimation (Eqs.~(\ref{ar:b11})
and~(\ref{ar:b12})) through Monte Carlo simulations.
The basic idea behind the particle filters is that the posterior distribution
$P(\x(0\!:\!k)|\z(1\!\!:\!\!k))$ is represented by a collection of weighted random particles
$\{\bm{\cx}_{i}(k)\}_{i=1}^{N_s}$ derived from a proposal distribution
$q(\x(0\!\!:\!\!k)|\z(1\!\!:\!\!k))$ with normalized weights $W_{i}(k)
= \frac{P(\cx_i(k)|\z(1:k))}{q(\cx_i(0:k)|\z(1:k))}$, for $(1\leq
i\leq N_s)$, associated with the vector particles. The particle filter implements
the filtering recursions approximately by propagating the weighted
particles, ($1 \leq i \leq N_s$), using the following recursions
at iteration $k$.
\begin{eqnarray} \label{proposal-1}
\mbox{Time Update:} \quad\quad  \cx_i(k) &\!\!\sim\!\!& q\Big(\bm{\cx}_i(k)|\cx_i(0\!:\!k\!-\!1), \z(1\!:\!k)\Big) \\
\mbox{Observation Update:} \quad\quad W_i(k) &\!\!\propto\!\!& W_i(k-1)\frac{P\Big(\z(k)|\bm{\cx}_i(k)\Big) P\Big(\bm{\cx}_i(k)|\cx_i(k\!-\!1)\Big)}{q\Big(\bm{\cx}_i(k)|\cx_i(0\!:\!k\!-\!1), \z(1\!:\!k)\Big)}, \label{proposal-1-2}
\end{eqnarray}
where $\propto$ stands for the proportional sign and
the proposal distribution satisfies the following~factorization
\begin{eqnarray} \label{sec.3:eq.6}
q\Big(\x(0\!:\!k)|\z(1\!:\!k)\Big) =  q\Big(\x(0\!:\!k\!-\!1)|\z(1\!:\!k\!-\!1)\Big)
q\Big(\x(k)|\x(0\!:\!k\!-\!1), \z(1\!:\!k)\Big).
\end{eqnarray}
The accuracy of this importance sampling approximation depends on how
close the proposal distribution is to the true posterior distribution.
The optimal choice~\cite{Unscented:pk} for the proposal distribution that minimizes the variance of importance weights is the filtering density conditioned upon $\x(0:k-1)$ and $\z(k)$, i.e.,
\begin{eqnarray} \label{sec.3:eq.8}
q\Big(\x(k)|\x(0\!:\!k\!-\!1), \z(1\!:\!k)\Big) = P\Big(\x(k)|\x(0\!:\!k\!-\!1), \z(k)\Big).
 \end{eqnarray}
Because of the difficulty in sampling Eq.~(\ref{sec.3:eq.8}), a common
choice~\cite{Unscented:pk} for the proposal distribution is the
transition density, $P(\x(k)|\x(k-1))$, referred to as the sampling
importance resampling (SIR) filter, where the weights are pointwise
evaluation of the likelihood function at the particle values,~i.e.,
$W_i(k) \propto W_i(k\!-\!1)P(\z(k)|\cx_i(k))$.

%=================================================
\vspace{0.05in}
\noindent
\textit{B. Average Consensus Algorithms} \label{sec:aca} % Sec. 4.
%=================================================

%===================
% Arash- START
%===================
The average consensus algorithm~\cite{Olfati:4, Dimakis:2010} considered in the manuscript is represented by
\begin{eqnarray} \label{sec.4:eq.5}
X^{(l)}_c(t + 1) = U_{ll}(t) X^{(l)}_c(t) + \sum_{j \in \aleph^{(l)}} U_{lj}(t) X^{(j)}_c(t),
\end{eqnarray}
where $X^{(l)}_c$(t) is the consensus state variable(s) at node $l$, for ($1 \leq l \leq N$),  $t$ is the consensus time
index that is different from the filtering time index $k$, and $\aleph(l)$ represents the set of neighbouring nodes for
node~$l$. The convergence properties of the average consensus algorithms are reviewed in~\cite{Olshevsky:2009}.  Please refer to~\cite{Olfati:4, Dimakis:2010} for further details on consensus algorithms.
%=================================================
\section{The CF/DPF Implementation} \label{sec:gcf/dpf}
%=================================================
The  CF/DPF implementation runs two localized particle filters at each sensor
 node as shown in Fig.~\ref{g1}.
The first filter, referred to as the local filter, comes from
the distributed implementation of the particle filter described in
Section~\ref{sec:cpf} and is based only on the local observations
$\z^{(l)}(1\!:\!k)$.
The CF/DPF introduces a second particle filter at each node, referred
to as the fusion filter, which estimates the global
posterior distribution $P( \x(0\!:\!k)|\z(1\!:\!k))$ from local distributions $P(\x(k)|\z^{(l)}(1\!:\!k))$ and $P(\x(k)|\z^{(l)}(1\!:\!k\!-\!1))$.
%=================================================
\subsection{Distributed Configuration and Local Filters} \label{sec:cpf}
%=================================================
Our distributed implementation is based on the following model
\begin{eqnarray}
\x(k) &=& \f(\x(k-1)) + \bm{\xi}(k)\\
\z^{(l)}(k) &=&
\bm{g}^{(l)}(\x(k)) + \bm{\zeta}^{(l)}(k),
\end{eqnarray}
for sensor nodes $(1 \leq l \leq N)$. In other words, the entire state
vector $\x(k)$ is estimated by running one localized particle filter at each node.
These filters, referred to as the local filters, come from the
distributed implementation of the particle filter and are based only on local
observations $\z^{(l)}(1\!:\!k)$.  In addition to updating the
particles and their associated weights, the local filter at node $l$
provides estimates of the local prediction distribution $P(
\x(k)|\z^{(l)}(1\!:\!k\!-\!1))$ from the particles as explained below.
\vspace*{0.1in}
\\\textbf{Computation and Sampling of the Prediction Distribution:}
From the Chapman-Kolmogorov equation (Eq.~(\ref{ar:b11})), a sample based approximation of the prediction density $P(\x(k)|\z^{(l)}(1\!:\!k\!-\!1))$ is expressed as
\begin{eqnarray} \label{chap5:sec3:part1:eq13}
P\left(\x(k)|\z^{(l)}(1\!:\!k\!\!-\!\!1)\right) = \sum_{i=1}^{N_s} W^{(l)}_i(k\!\!-\!\!1)P\left(\x(k)|\cx^{(l)}_i(k\!\!-\!\!1)\right),
\end{eqnarray}
which is a continuous mixture.  To generate random particles
 from such a mixture density, a new sample $\cx^{(l)}_i(k|k\!-\!1)$ is
 generated from its corresponding mixture
 $P(\x(k)|\cx^{(l)}_i(k\!-\!1))$ in Eq.~(\ref{chap5:sec3:part1:eq13}).
 Its weight $W^{(l)}_i(k\!-\!1)$ is the same as the corresponding
 weight for $\cx^{(l)}_i(k\!-\!1)$. The prediction  density is given by
\begin{equation}
P\left(\x(k)|\z^{(l)}(1\!:\!k\!-\!1)\right) = \sum_{i=1}^{N_s} W^{(l)}_i(k\!-\!1)
\delta\!\left( \x(k)\!-\!\cx^{(l)}_i(k|k\!-\!1)\right).\nonumber
\end{equation}
Once the random samples are generated, the minimum mean square error estimates (MMSE) of the
parameters can be computed.

%=================================================

\vspace{.05in}
\noindent
\textit{B. Fusion Filter} \label{sec:ff}
%=================================================

The CF/DPF introduces a second particle filter at each node, referred
to as the fusion filter, which computes an estimate of the global
posterior distribution $P( \x(0\!:\!k)|\z(1\!:\!k))$.
Being a particle filter itself, implementation of the fusion filter requires the proposal distribution and the weight update equation.
Theorem~\ref{fprotocol}~\cite{Chong:1990} expresses the global posterior distribution in terms of the local filtering densities, which is used for updating the weights of the fusion filter. The selection of the proposal distribution is explained later in Section~III-E.
In the following discussion, the fusion filter's particles and their associated
weights at node $l$ are denoted by $ \{ \cx^{(l, \GF)}_i(k), W^{(l, \GF)}_i(k)
\}_{i=1}^{N_{\GF}}$.
\begin{thm}\label{fprotocol}
Assuming that the observations conditioned on the state variables and
made at node $l$ are independent of those made at node $j$,
($j \neq l$), the global posterior distribution for a $N$--sensor
network~is 
\begin{eqnarray} \label{chap3:sec2:eq8}
P\Big(\x(0\!:\!k)|\z(1\!:\!k)\Big) \propto \frac{\prod_{l=1}^N P
\Big( \x(k)|\z^{(l)}(1\!:\!k)\Big)}{\prod_{l=1}^N P
\Big( \x(k)|\z^{(l)}(1\!:\!k\!-\!1)\Big)}
\!\times\! P\Big(\x(0\!:\!k)|\z(1\!:\!k\!-\!1)\Big),
\end{eqnarray}
where the last term can be factorized as follows
\begin{eqnarray} \label{chap5:sec3:part1:eq2}
 P\Big(\x(0\!:\!k)|\z(1\!:\!k\!-\!1)\Big) =  P\Big(\x(k)|\x(k\!-\!1)\Big)
P\Big(\x(0\!:\!k\!-\!1)|\z(1\!:\!k\!-\!1)\Big).
\end{eqnarray}
\end{thm}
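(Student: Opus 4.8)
The plan is to start from Bayes' rule applied to the \emph{entire} trajectory $\x(0\!:\!k)$ and progressively rewrite the global likelihood in terms of quantities that each node can compute locally. First I would write
\begin{eqnarray}
P\Big(\x(0\!:\!k)|\z(1\!:\!k)\Big) = \frac{P\Big(\z(k)|\x(0\!:\!k),\z(1\!:\!k\!-\!1)\Big)\,P\Big(\x(0\!:\!k)|\z(1\!:\!k\!-\!1)\Big)}{P\Big(\z(k)|\z(1\!:\!k\!-\!1)\Big)},\nonumber
\end{eqnarray}
and invoke the Markov structure of the observation model~(\ref{sec.3:eq.1.2}), which gives $P(\z(k)|\x(0\!:\!k),\z(1\!:\!k\!-\!1)) = P(\z(k)|\x(k))$ since $\z(k)$ depends on the trajectory only through $\x(k)$. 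The cross-node conditional-independence hypothesis then factors this joint likelihood as $P(\z(k)|\x(k)) = \prod_{l=1}^N P(\z^{(l)}(k)|\x(k))$, while the denominator $P(\z(k)|\z(1\!:\!k\!-\!1))$ is $\x$-independent and can be absorbed into the proportionality constant.

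The crucial step, which I expect to be the main obstacle, is to eliminate each local likelihood $P(\z^{(l)}(k)|\x(k))$ in favour of the densities that the local filters actually report. I would apply Bayes' rule \emph{at each node} to obtain
\begin{eqnarray}
P\Big(\x(k)|\z^{(l)}(1\!:\!k)\Big) = \frac{P\Big(\z^{(l)}(k)|\x(k)\Big)\,P\Big(\x(k)|\z^{(l)}(1\!:\!k\!-\!1)\Big)}{P\Big(\z^{(l)}(k)|\z^{(l)}(1\!:\!k\!-\!1)\Big)},\nonumber
\end{eqnarray}
again using $P(\z^{(l)}(k)|\x(k),\z^{(l)}(1\!:\!k\!-\!1)) = P(\z^{(l)}(k)|\x(k))$. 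Solving for the local likelihood yields $P(\z^{(l)}(k)|\x(k)) \propto P(\x(k)|\z^{(l)}(1\!:\!k))/P(\x(k)|\z^{(l)}(1\!:\!k\!-\!1))$, where the node-level evidence $P(\z^{(l)}(k)|\z^{(l)}(1\!:\!k\!-\!1))$ is a further $\x$-free constant. The care needed here is to verify that every factor discarded into the single $\propto$ is genuinely independent of $\x(0\!:\!k)$; this bookkeeping is precisely where the independence assumption does the real work, because it is what lets the global evidence split into a product of local evidences without leaving residual cross terms coupling the nodes.

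Substituting these ratios into the product over $l$ and multiplying by the one-step prediction $P(\x(0\!:\!k)|\z(1\!:\!k\!-\!1))$ then reproduces Eq.~(\ref{chap3:sec2:eq8}) directly. Finally, the factorization~(\ref{chap5:sec3:part1:eq2}) follows from the probability chain rule $P(\x(0\!:\!k)|\z(1\!:\!k\!-\!1)) = P(\x(k)|\x(0\!:\!k\!-\!1),\z(1\!:\!k\!-\!1))\,P(\x(0\!:\!k\!-\!1)|\z(1\!:\!k\!-\!1))$ together with the first-order Markov property of the state model~(\ref{sec.3:eq.1.1}), which collapses the first factor to the transition density $P(\x(k)|\x(k\!-\!1))$. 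I would close by noting that the entire derivation is exact and rests only on the two Markov structures of the state-space model plus the single cross-node conditional-independence hypothesis stated in the theorem.
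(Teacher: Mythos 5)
Your proposal is correct and follows essentially the same route as the paper's own proof in Appendix~A: Bayes' rule on the joint posterior, the observation Markov property, the cross-node conditional-independence factorization of the joint likelihood, a node-level Bayes' rule to replace each local likelihood by the ratio of local filtering to prediction densities, and the state Markov property for the factorization of $P(\x(0\!:\!k)|\z(1\!:\!k\!-\!1))$. The only differences are cosmetic: you track the evidence terms explicitly before absorbing them into the proportionality constant, and you apply the prediction-density factorization at the end rather than midway through.
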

The proof of Theorem~\ref{fprotocol} is included in Appendix~\ref{app:A}.
Note that the optimal distributed protocol defined in
Eq.~(\ref{chap3:sec2:eq8}) consists of three terms: (i) Product of the
local filtering distribution $\prod_{i=1}^N P(
\x(k)|\z^{(l)}(1\!:\!k))$ which depends on local
observations; (ii) Product of local prediction densities
$\prod_{i=1}^N P( \x(k)|\z^{(l)}(1\!:\!k\!-\!1))$, which is
again only based on the local observations and represent the common
information between neighboring nodes, and; (iii) Global prediction
density $P\left( \x(0\!:\!k)|\z(1\!:\!k\!-\!1)\right)$ based on
Eq.~(\ref{chap5:sec3:part1:eq2}). The fusion rule, therefore, requires
consensus algorithms to be run for terms (i) and (ii).
The proposed CF/DPF computes the two terms separately (as described later) by running two consensus algorithms at each iteration of the fusion filter. An alternative is to compute the ratio (i.e., proportional to the local likelihood) at each node and run one consensus algorithm for computing the ratio term.
In the CF/DPF, we propose to estimate the numerator and denominator of Eq.~\eqref{chap3:sec2:eq8} separately  because
 maintaining the local filtering and prediction distributions is advantageous in networks with intermittent connectivity as it allows the CF/DPF to recover from loss of information due to delays in convergence.
Maintaining the likelihood prevents the recovery of the CF/DPF in such cases.
%=================================================

\vspace{.05in}
\noindent
\textit{C. Weight Update Equation}
%\subsection{Weight Update Equation:}
%=================================================

Assume that the
local filters have reached steady state at iteration $k$, i.e., the
local filter's computation is completed up to and including time
iteration $k$ where a particle filter based estimate of the local
filtering distribution is available. The weight update equation for
the fusion filter is given by
\begin{eqnarray} \label{chap5:sec3:part1:eq4}
W^{(l,\GF)}_i(k) = \frac{ P\left(\cx^{(l,\GF)}_i(k)|\z(1\!:\!k)\right)}
{q\left(\cx^{(l,\GF)}_i(k)|\z(1\!:\!k)\right)}.
\end{eqnarray}
The CF/DPF is derived based on the global posterior $P(\x(0\!:\!k)|\z(1\!:\!k))$ which is the standard approach in the particle filter literature~\cite{Tutorial:PF}. Further, we are only interested in a filtered estimate of the state variables
$P(\x(k)|\z(1\!:\!k))$ at each iteration. Following~\cite{Tutorial:PF} we, therefore,  approximate $q(\x(k)|\x(1\!:\!k\!-\!1), \z(1\!:\!k)) =
q(\x(k)|\x(k\!-\!1), \z(k))$.
 The proposal density is then dependent
only on $\x(k)$ and $\z(k)$. In such a scenario, one can discard the
history of the
particles $\cx^{(l,\GF)}_i(0\!:\!k\!-\!2)$ at previous
iterations~\cite{Tutorial:PF}.  Substituting
Eq.~(\ref{chap5:sec3:part1:eq2}) in Eq.~(\ref{chap3:sec2:eq8}) and
using the result together with Eq.~(\ref{sec.3:eq.6}) in
Eq.~(\ref{chap5:sec3:part1:eq4}), the weight update equation is given by
\begin{eqnarray} \label{chap5:sec3:part1:eq5}
\lefteqn{W^{(l,\GF)}_i(k)\propto W^{(l,\GF)}_i(k\!-\!1)
\frac{\prod_{l=1}^N P\left( \cx^{(l,\GF)}_i(k)|\z^{(l)}(1\!:\!k)\right)}
{\prod_{l=1}^N P\left( \cx^{(l,\GF)}_i(k)|\z^{(l)}(1\!:\!k\!-\!1)\right)}\;
\frac{P\left(\cx^{(l,\GF)}_i(k)|\cx^{(l,\GF)}_i(k\!-\!1)\right)}
{q\left(\cx^{(l,\GF)}_i(k)|\cx^{(l,\GF)}_i(k\!-\!1), \z(k)\right)}, }\\
&&\!\!\!\!\!\!\!\!\!\!\mbox{where} \quad\quad\quad\quad\quad\quad\quad\quad W^{(l,\GF)}_i(k\!-\!1) = \frac{P\left(\cx^{(l,\GF)}_i(k\!-\!1)|\z(1\!:\!k\!-\!1)\right)}{q\left(\cx^{(l,\GF)}_i(k-1)|\z(1\!:\!k\!-\!1)\right)}.~~~~~~~~~~~~~~~~~~~~~~~~~~~~~~\label{chap5:sec3:part1:eq6}
\end{eqnarray}
Observe that only the first fraction in
Eq.~(\ref{chap5:sec3:part1:eq5}) requires all nodes to participate.
Given the weights $W^{(l,\GF)}_i(k\!-\!1)$ from the previous iteration,
Eq.~(\ref{chap5:sec3:part1:eq5}) requires the following distributions
\begin{eqnarray}\label{term1}
\prod_{l=1}^N P\left(
\cx^{(l,\GF)}_i(k)|\z^{(l)}(1\!:\!k)\right) \quad\quad
\mbox{and} \quad\quad \prod_{l=1}^N P\left(
\cx^{(l,\GF)}_i(k)|\z^{(l)}(1\!:\!k\!-\!1)\right).
\label{term2}
\end{eqnarray}
The numerator of the second fraction in
Eq.~(\ref{chap5:sec3:part1:eq5}) requires the transitional
distribution $P(\x(k)|\x(k-1))$, which is known from the state
model. Its denominator requires the proposal distribution
$q(\x(k)|\x(k\!-\!1),\z(k))$.  Below, we show how the three terms (Eq.~(\ref{term1}) and the proposal distribution) are
determined.
%
%=================================================

\vspace{.05in}
\noindent
\textit{D. Distributed Computation of Product Densities}
%=================================================

Distributions in Eq.~(\ref{term1}) are not
determined by transferring the whole particle vectors and their
associated weights between the neighboring nodes due to an
impractically large number of information transfers.  Further, the
localized posteriors are represented as a Dirac mixture
 in the particle filter.  Two separate Dirac
mixtures may not have the same support and their multiplication could
possibly be zero. If not, the product may not represent the true
product density accurately. In order to tackle these problems, a
transformation is required on the Dirac function particle
representations by converting them to continuous distributions prior
to communication and fusion.
 Gaussian distributions~\cite{Rigatos:2010,Simonetto:2010,Huang:2008,consdpf:1,Coates:2010,Arash:ssp2011}, grid-based techniques~\cite{Rosencrantz:2003}, Gaussian Mixture Model (GMM)~\cite{sheng:2005} and Parzen representations~\cite{dpd:durant} are different parametric continuous distributions used in the context of the distributed particle filter implementations.
The channel filter framework~\cite{dpd:durant} fuses only two local distributions, therefore, the local pdfs can be modeled~\cite{dpd:durant} with such complex distributions.
Incorporating these distributions in the CF/DPF framework is, however, not a trivial task because the CF/DPF computes the product of $N$ local distributions. The use of a complex distribution like GMM is, therefore,  computationally prohibitive.

In order to tackle this problem, we approximate the product terms in
Eq.~(\ref{chap5:sec3:part1:eq5}) with Gaussian distribution which
results in local filtering and prediction densities to be normally
distributed as follows
\begin{eqnarray} \label{chap5:sec3:part1:eq7}
P\left(\x(k)|\z^{(l)}(1\!:\!k)\right) \propto
\mathcal{N}\left(\mu^{(l)}(k),\bm{P}^{(l)}(k)\right) \mbox{ and }
P\left(\x(k)|\z^{(l)}(1\!:\!k-1)\right) \propto
\mathcal{N}\left(\bm{\nu}^{(l)}(k),\bm{R}^{(l)}(k)\right),
\end{eqnarray}
where $\bm{\mu}^{(l)}(k)$ and $\bm{P}^{(l)}(k)$ are, respectively, the
mean and covariance of local particles at node $l$ during the
filtering step of iteration $k$. Similarly, $\bm{\nu}^{(l)}(k)$ and
$\bm{R}^{(l)}(k)$ are, respectively, the mean and covariance of local
particles at node $l$ during the prediction step. It should be noted
that we only approximate the product density for updating the weights
with a Gaussian distribution and the global posterior distribution is
not restricted to be Gaussian.
The local statistics at node $l$ are computed as
\begin{eqnarray} \label{chap5:sec3:part1:eq14-1}
\bm{\mu}^{(l)}(k) \!\!=\!\! \sum_{i=1}^{N_s} W^{(l)}_i(k)\cx^{(l)}_i(k)
~~\mbox{and }~~ \bm{P}^{(l)}(k)\!\!=\!\!
\sum_{i=1}^{N_s} W^{(l)}_i(k)\left( \cx^{(l)}_i(k) \!-\! \bm{\mu}^{(l)}(k) \right) \left( \cx^{(l)}_i(k) \!-\! \bm{\mu}^{(l)}(k) \right)^T\!\!\!\!\!.
\end{eqnarray}
Reference~\cite{Gales:2006} shows that
the product of $N$ multivariate normal distributions is also normal, i.e.,
\begin{eqnarray} \label{chap5:sec3:part1:eq8}
\prod_{l=1}^N P\left( \x(k)|\z^{(l)}(1\!:\!k)\right) \triangleq
\prod_{l=1}^N \mathcal{N}\left(\mu^{(l)}(k),\bm{P}^{(l)}(k)\right)
=  \frac{1}{C} \times \mathcal{N}(\bm{\mu}(k),\bm{P}(k)),
\end{eqnarray}
where $C$ is a normalization term (Reference~\cite{Gales:2006} includes the proof).
Parameters $\bm{\mu}(k)$ and $\bm{P}(k)$ are 
\begin{eqnarray} \label{chap5:sec3:part1:eq9-1}
\bm{P}(k) = \big( \sum_{l=1}^N
\underbrace{\left(\bm{P}^{(l)}(k)\right)^{-1}}_{\X^{(l)}_{c1}(0)} \big)^{-1}
\quad \mbox{and} \quad
\bm{\mu}(k) = \bm{P}(k)\times \sum_{l=1}^N
\underbrace{\left(\bm{P}^{(l)}(k)\right)^{-1}\bm{\mu}^{(l)}(k).}_{\x^{(l)}_{c2}(0)}
\end{eqnarray}
Similarly, the product of local prediction densities (Term~(\ref{term2}))
is modeled with a Gaussian density
$\mathcal{N}(\x(k);\bm{\upsilon}(k),\bm{R}(k))$, where
the parameters $\bm{\upsilon}(k)$ and
$\bm{R}(k)$ are computed as follows
\begin{eqnarray} \label{chap5:sec3:part1:eq9-12}
\bm{R}(k) = \big( \sum_{l=1}^N
\underbrace{\left(\bm{R}^{(l)}(k)\right)^{-1}}_{\X^{(l)}_{c3}(0)} \big)^{-1}
\quad \mbox{and} \quad
\bm{\upsilon}(k) = \bm{R}(k)\times \sum_{l=1}^N
\underbrace{\left(\bm{R}^{(l)}(k)\right)^{-1}\bm{\upsilon}^{(l)}(k).}_{\x^{(l)}_{c4}(0)}
\end{eqnarray}
 The parameters of the product distributions only involves average quantities and can be
provided using average consensus algorithms as follows:
\begin{itemize}
\item[(i)] For ($1\!\leq\! l\!\leq\! N$), node $l$ initializes its consensus
  states to
$\X^{(l)}_{c1}(0) = (\bm{P}^{(l)}(k))^{-1}$,
$\x^{(l)}_{c2}(0) = (\bm{P}^{(l)}(k))^{-1}\bm{\mu}^{(l)}(k)$,
$\X^{(l)}_{c3}(0) = (\bm{R}^{(l)}(k))^{-1}$,
and $\x^{(l)}_{c4}(0) = (\bm{R}^{(l)}(k))^{-1}\bm{\upsilon}^{(l)}(k)$,
and then Eq.~(\ref{sec.4:eq.5}) is used to reach consensus with $\X^{(l)}_{c1}(t)$ used instead of $X_c^{(l)}(t)$ in Eq.~(\ref{sec.4:eq.5}) for the first consensus run. Similarly, $\x^{(l)}_{c2}(t)$ is used instead of $X_c^{(l)}(t)$ for the second run and so on.
\item[(ii)] Once consensus is reached, parameters $\bm{\mu}^{(l)}(k)$
  and $\bm{P}^{(l)}(k)$ are computed as follows
\begin{eqnarray}
\label{chap5:sec3:part1:eq16}
\bm{P}(k) &\!=\!& 1/N \times \lim_{t\rightarrow\infty}
\left\{\left(\X^{(l)}_{c1}(t)\right)^{-1} \right\} \quad \mbox{and} \quad
\bm{\mu}(k) \!=\!
\lim_{t\rightarrow\infty} \left\{ \left(\X^{(l)}_{c1}(t)\right)^{-1} \times
\x^{(l)}_{c2}(t) \right\} \\
\bm{R}(k) &\!=\!& 1/N \times \lim_{t\rightarrow\infty}
\left\{\left(\X^{(l)}_{c3}(t)\right)^{-1} \right\} \quad \mbox{and} \quad
\bm{\upsilon}(k) \!=\!
\lim_{t\rightarrow\infty} \left\{ \left(\X^{(l)}_{c3}(t)\right)^{-1} \times
\x^{(l)}_{c4}(t) \right\}.
\end{eqnarray}
\end{itemize}
Based on aforementioned approximation, the weight update equation of
the fusion filter (Eq.~(\ref{chap5:sec3:part1:eq5})) is %given by
\begin{eqnarray} \label{chap5:sec3:part1:eq15}
W^{(l,\GF)}_i(k) \!\propto\! W^{(l,\GF)}_i(k\!-\!1)
\frac{\mathcal{N}\big(\cx^{(l,\GF)}_i (k); \bm{\mu}(k),\bm{P}(k)\big)P\big(\cx^{(l,\GF)}_i(k)|\cx^{(l,\GF)}_i(k\!-\!1)\big)}
{\mathcal{N}\big(\cx^{(l,\GF)}_i  (k); \bm{\upsilon}(k),\bm{R}(k)\big)q\big(\cx^{(l,\GF)}_i(k)|\cx^{(l,\GF)}_i(k\!-\!1), \z(k)\big)}.
\end{eqnarray}
Eq.~(\ref{chap5:sec3:part1:eq15}) requires the proposal
distribution $q(\x(k)|\x(k\!-\!1), \z(k))$, which is discussed next.
%=================================================

\vspace{.05in}
\noindent
\textit{E. Proposal Distribution} \label{sec:prop}
%=================================================

In this section, we describe three different proposal distributions in CF/DPF.

\noindent
\textbf{$1.$ SIR Fusion Filter:}
The most common strategy is to sample from the probabilistic model of the state evolution, i.e., to use transitional density $P(\x(k)|\x(k\!+\!1))$ as proposal distribution. The simplified weight update equation for the SIR fusion filter is obtained from Eq.~(\ref{chap5:sec3:part1:eq15}) as follows
\begin{eqnarray} \label{chap5:sec3:part1:eq155}
W^{(l,\GF)}_i(k) \!\propto\! W^{(l,\GF)}_i(k\!-\!1)
\frac{\mathcal{N}\big(\cx^{(l,\GF)}_i (k); \bm{\mu}(k),\bm{P}(k)\big)}
{\mathcal{N}\big(\cx^{(l,\GF)}_i  (k); \bm{\upsilon}(k),\bm{R}(k)\big)}.
\end{eqnarray}
This SIR fusion filter fails if a new measurement appears in the tail of the transitional distribution or when the likelihood is too peaked in comparison with the transitional density.

\noindent
\textbf{$2.$ Product Density as Proposal Distribution:}
We are free to choose any proposal distribution that
appropriately considers the effect of new observations and is close to
the global posterior distribution. The product of local
filtering densities is a reasonable approximation of the
global posterior density as such a good candidate for the proposal
distribution, i.e.,
\begin{eqnarray} \label{chap5:sec3:part1:eq10}
q(\x(k)|\x(k\!-\!1), \z(1\!:\!k)) \triangleq
\prod_{l=1}^N P\left( \x(k)|\z^{(l)}(1\!:\!k) \right),
\end{eqnarray}
which means that we generate particles
$\{\cx^{(l,\GF)}_i(k)\}_{i=1}^{N_s}$ are generated from
$\mathcal{N}(\bm{\mu}(k),\bm{P}(k))$. In such a scenario, the weight
update equation (Eq.~(\ref{chap5:sec3:part1:eq15})) simplifies to
\begin{eqnarray} \label{chap5:sec3:part1:eq11}
W^{(l,\GF)}_i(k)\!\propto\! W^{(l,\GF)}_i(k\!-\!1)
\frac{P(\cx^{(l,\GF)}_i(k)|\cx^{(l,\GF)}_i(k\!-\!1))}
{\mathcal{N}(\cx^{(l,\GF)}_i(k) ; \bm{\upsilon}(k),\bm{R}(k))}.
\end{eqnarray}
Next we justify that the product term is a good choice and a near-optimal approximation of the optimal proposal distribution (Eq.~(\ref{sec.3:eq.8})).
 Assume at iteration $k$, node $l$, for ($1 \leq l \leq N$) computes an unbiased local estimate $\hat{\x}^{(l)}(k)$ of the state variables $\x(k)$ from its particle-based representation of the filtering distribution
with the corresponding error and error covariance denoted by $\Delta_x^{(l)}(k) = \x(k)-\hat{\x}^{(l)}(k)$ and $\hat{\bm{P}}^{(l)}(k)$.
When the estimation error $\Delta_x^{(i)}(k)$ and $\Delta_x^{(j)}(k)$, for ($1 \leq i,j \leq N$) and $i \neq j$ are uncorrelated, the optimal fusion of $N$ unbiased local estimates $\hat{\x}^{(l)}(k)$ in linear minimum variance scene is shown~\cite{sun:2004} to be given by
\begin{eqnarray} \label{ext:CF/DPF:eq4}
\hat{\bm{P}}(k) = \big( \sum_{l=1}^N \left(\hat{\bm{P}}^{(l)}(k)\right)^{-1} \big)^{-1}
\quad \mbox{and} \quad
\hat{\x}(k) = \big( \sum_{l=1}^N
\left(\hat{\bm{P}}^{(l)}(k)\right)^{-1} \big)^{-1} \times \sum_{l=1}^N
\left(\hat{\bm{P}}^{(l)}(k)\right)^{-1}\hat{\x}^{(l)}(k).
\end{eqnarray}
where $\hat{\x}(k)$ is the overall estimate obtained from $P( \x(k)|\z(1\!:\!k))$
with error covariance $\hat{\bm{P}}(k)$.
Eq.~(\ref{ext:CF/DPF:eq4})
 is the same as Eq.~(\ref{chap5:sec3:part1:eq9-1}), which describes the statistics of the product of $N$ normally distributed densities.
The optimal proposal distribution is also a filtering density~\cite{Tutorial:PF},
therefore, the proposal distribution defined in Eq.~(\ref{chap5:sec3:part1:eq10}) is a good choice that simplifies the update equation of the fusion filter. Further, Eq.~(\ref{chap5:sec3:part1:eq10}) is a reasonable approximation of the optimal proposal distribution.
From the framework of unscented Kalman filter and unscented particle filter, it is well known~\cite{Unscented:pk} that approximating distributions will be advantageous over approximating non-linear functions.
The drawback with this proposal density is the impractical assumption that the local estimates are uncorrelated. We improve the performance of the fusion filter using a better approximation of the optimal proposal distribution, which is described next.

\noindent
\textbf{$3.$ Gaussian Approximation of The Optimal Proposal Distribution:}
We consider the optimal solution to the fusion protocol (Eq.~(\ref{chap3:sec2:eq8})) when local filtering densities are normally distributed. In such a case,  $P(\x(0\!:\!k)|\z(1\!:\!k-1))$ is also normally distributed~\cite{Chong:1990}
with mean  $\x^{(l, \ggg)}(k)$ and
covariance $\bm{P}^{(l, \ggg)}(k)$
\begin{eqnarray} \label{eq1-1:sec4}
\!\!\!\!\!\!\!\!\!\!\!\!\!\!\!\!\!\!\!&& \hat{\bm{P}}^{(l, \ggg)^{-1}}(k) = \left(\bm{R}^{(l)}(k)\right)^{-1} +
\underbrace{\sum_{j=1}^{N} \bm{P}^{(j)^{-1}}(k)}_{\x^{(l)}_{c1}(\infty)} -\underbrace{\sum_{j=1}^{N}
\bm{R}^{(j)^{-1}}(k)}_{\x^{(l)}_{c3}(\infty)} \\
\!\!\!\!\!\!\!\!\!\!\!\!\!\!\!\!\!\!\!&& \hat{\x}^{(l, \ggg)}(k) = \bm{P}^{(l, \ggg)^{-1}}(k)
\big[ \left(\bm{R}^{(l)}(k)\right)^{-1} \bm{\upsilon}^{(l)}(k)\!+\! \underbrace{\sum_{j=1}^{N} \bm{P}^{(j)^{-1}}(k)
\bm{\mu}^{(j)}(k)}_{\x^{(l)}_{c2}(\infty)} - \underbrace{\sum_{j=1}^{N}\! \left(\bm{R}^{(j)}(k)\right)^{-1} \bm{\upsilon}^{(j)}(k)}_{\x^{(l)}_{c4}(\infty)}\big].  \label{eq1-2:sec4}
\end{eqnarray}
The four terms $\x_{c1}^{(l)}(\infty)$, $\x_{c2}^{(l)}(\infty)$, $\x_{c3}^{(l)}(\infty)$, and $\x_{c4}^{(l)}(\infty)$ are already computed and available at local nodes as part of computing the product terms.
Fusion rules in Eqs.~(\ref{eq1-1:sec4}) and (\ref{eq1-2:sec4}) are obtained based on the track fusion without feedback~\cite{Chong:1990}.
 In such a scenario, particles $\cx^{(l,\GF)}_i (k)$ are drawn from $\mathcal{N}(\x^{(l, \ggg)}(k),\bm{P}^{(l, \ggg)}(k))$ and the weight
update equation (Eq.~(\ref{chap5:sec3:part1:eq11})) is given by
\begin{eqnarray} \label{chap5:sec3:part1:eq255}
W^{(l,\GF)}_i(k) \!\propto\! W^{(l,\GF)}_i(k\!-\!1)
\frac{\mathcal{N}\big(\cx^{(l,\GF)}_i (k); \bm{\mu}(k),\bm{P}(k)\big)P\big(\cx^{(l,\GF)}_i(k)|\cx^{(l,\GF)}_i(k\!-\!1)\big)}
{\mathcal{N}\big(\cx^{(l,\GF)}_i  (k); \bm{\upsilon}(k),\bm{R}(k)\big)\mathcal{N}\big(\cx^{(l,\GF)}_i  (k); \x^{(l, \ggg)}_{(k)},\bm{P}^{(l, \ggg)}_{(k)}\big)}.
\end{eqnarray}
 The various steps of the fusion filter are outlined in Algorithm~\ref{algo:fusionfilter}.
The filtering step of the CD/DPF is based on running the localized filters at each node followed by the fusion filter,
 which computes the global posterior density by running consensus algorithm across the network. At the completion of the consensus step, all nodes have the same global posterior available.
As a side note to our discussion, we note that the CF/DPF does not incorporate any feedback from the fusion filters to the localized filters to provide sufficient time for the fusion filter to converge.
The main advantage of the feedback is to reduce the error of the local filters which will be considered as~future work.
Finally, a possible future extension of the CF/DPF is to use non-parametric models, e.g., support vector machines (SVM)~\cite{Challa:2002, Liu:2009}, instead,  for approximating the product terms.
An important task in CF/DPF is to assure that the localized and fusion filters do not lose synchronization. This issue is addressed in Section~\ref{sec:mff}.
%=================================================
%=================================================
%
\begin{algorithm*}[t!]
\caption{\textproc{~Fusion Filter}($ \{ \cx^{(l,\GF)}_i(k-1),
  W^{(l,\GF)}_i(k-1) \}_{i=1}^{N_{\GF}}$)}
\label{algo:fusionfilter}

\begin{algorithmic}[1]
\Input $\{\cx^{(l,\GF)}_i(k-1), W^{(l,\GF)}_i(k-1)\}_{i=1}^{N_{l,\GF}}$ -
Fusion filter's particles and associated weights.
\Output $\{\cx^{(l,\GF)}_i(k), W^{(l,\GF)}_i(k)\}_{i=1}^{N_s}$ Fusion
filter's updated particles and associated weights.

\For{$l=1:N,$}

$\left(\bm{\mu}^{(l)}(k),
\bm{P}^{(l)}(k),\bm{\upsilon}^{(l)}(k),\bm{R}^{(l)}(k) \right) = $
LocalFilter$\left(\{ \cx_i^{(l)}(k-1), W_i^{(l)}(k-1)\}_{i=1}^{N_s},
\z^{(l)}(k)\right)$ \EndFor

\State DoFusion$\left(\{ \bm{\mu}^{(l)}(k),
\bm{P}^{(l)}(k)\}_{l=1}^{N}\right)$ computes $\{\bm{\mu}^{(l,\GF)}(k),
\bm{P}^{(l,\GF)}(k)\}$ for numerator of Eq.~(\ref{chap3:sec2:eq8}).

\State DoFusion$\left(\{
\bm{\upsilon}^{(l)}(k),\bm{R}^{(l)}(k)\}_{l=1}^{N}\right)$ computes
$\{\bm{\upsilon}^{(l,\GF)}(k),\bm{R}^{(l,\GF)}(k)\}$ for denominator
of~(\ref{chap3:sec2:eq8}).

\For{$i=1:N,$}

$\bullet$ Generate particles
$\left\{\cx^{(l,\GF)}_i(k)\right\}_{i=1}^{N_{l,\GF}}$ by sampling
proposal distribution defined in Section~III-E. 

$\bullet$ Compute weights $W^{(l,\GF)}(k)$ using
Eq.~(\ref{chap5:sec3:part1:eq11}).  \EndFor

\State If
degeneracy observed
$\left(\{ \cx_i^{(l,\GF)}(k),
W_i^{(l,\GF)}(k)\}_{i=1}^{N_{\GF}}\right)$ = Resample$\left(\{
\cx_i^{(l,\GF)}(k), W_i^{(l,\GF)}(k)\}_{i=1}^{N_{\GF}}\right)$.
\end{algorithmic}
\end{algorithm*}
%
%=================================================
%=================================================

\vspace{.05in}
\noindent
\textit{F. Computational complexity} \label{sec:cc}
%=================================================

In this section, we provide a rough comparison of the computational complexity of the CF/DPF versus that of the centralized implementation.
 Because of the non-linear dynamics of the particle filter, it is somewhat difficult to
derive a generalized expression for its computational complexity.
There are steps that can not be easily evaluated in the complexity computation of the particle filter such as the cost of evaluating a non-linear function (as is the case for the state and observation models)~\cite{Karlsson:2005}.
In order to provide a rough comparison, we consider below a simplified linear state model with Gaussian excitation and  uncorrelated Gaussian observations.
Following the approach proposed in~\cite{Karlsson:2005}, the computational complexity of two implementations of the particle filter is expressed in terms of flops, where a flop is defined as addition, subtraction, multiplication or division of two floating point numbers.
The  computational complexity of the centralized particle filter for $N$--node network with $N_s$ particles is of
 $\mbox{O}\left((n_x^2+N)N_s\right)$.
The CF/DPF runs the  local filter at each observation node which is similar in complexity to the centralized particle filter except that the observation (target's bearing at each node) is a scalar. Setting $N=1$, the computational complexity of the local filter  is of $\mbox{O}\left(n_x^2N_{\text{LF}}\right)$ per node, where $N_{\text{LF}}$ is the number of particles used by the local filter.
There are two additional components in the CF/DPF: (i) The fusion filter which has a complexity of $\mbox{O}(n_x^2N_{\GF})$ per node  where $N_{\GF}$ is the number of particles used by the fusion filter, and;
(ii) The CF/DPF introduces an
additional consensus step which has a computational complexity of $\mbox{O}(n_x^2\Delta_{\mathcal{G}}N_c(\bm{U}))$.
The associated convergence time $N_c(\bm{U}) = 1/\log(1/ r_{\asym}(\bm{U}))$
 provides the asymptotic number of consensus iterations required for the
error to decrease by the factor of $1/e$ and is expressed in terms of the asymptotic convergence rate  $r_{\asym}(\bm{U})$. Based  on~\cite{Olshevsky:2009}, $N_c(\bm{U}) = - 1/ \max_{2 \leq i \leq N} \log(|\lambda_i(\bm{U})|)$, where $\lambda_i(\bm{U})$ is the eigenvalue of the consensus matrix $\bm{U}$.
The overall computational complexity of the CF/DPF is, therefore, given by $\max \big\{\mbox{O}(Nn_x^2(N_{\text{LF}}+N_{\text{\GF}})), \mbox{O}(n_x^2\Delta_{\mathcal{G}}N_c(\bm{U}))\big\}$ compared to the computational complexity $\mbox{O}\left((n_x^2+N)N_s\right)$ of the centralized implementation.
%----------
Since the computational complexity of the two
implementations involve different variables, it is difficult to
compare them subjectively.  In our simulations (explained in
Section~\ref{sec:simu}), the value of the variables are: $n_x=4$,
$N=20$, $N_s=10,000$, $N_{\text{LF}} = N_{\text{FF}} =
 500$, and $N_c(\bm{U}) = 8$ for the network in
Fig.~\ref{Target_track}(a) which results in the following rough
computational counts for the two implementations: Centralized
implementation: $3.6\times10^5$, and CF/DPF: $3.4\times10^5$.
 This means that the
two implementations have roughly the same computational complexity
for the BOT simulation of interest to us.  We also note that the
computational burden is distributed evenly across the nodes in the
CF/DPF, while the fusion center performs most of the
computations in the centralized particle filter. This places an
additional power energy constraint on the fusion center causing the
system to fail if the power in the fusion center drains out.

%=================================================
\section{Modified Fusion Filter } \label{sec:mff}
%=================================================
In the CF/DPF, the local filters and the fusion filters can run out of synchronization due to intermittent network connectivity.
The local filters are confined to their sensor node and unaffected by loss of connectivity. The fusion filters, on the other hand,
run consensus algorithms. The convergence of these consensus algorithms is delayed if the communication bandwidth reduced.
In this section we develop ways of dealing with such intermittent connectivity issues.
\begin{figure}
\centerline{
\includegraphics[scale=0.5]{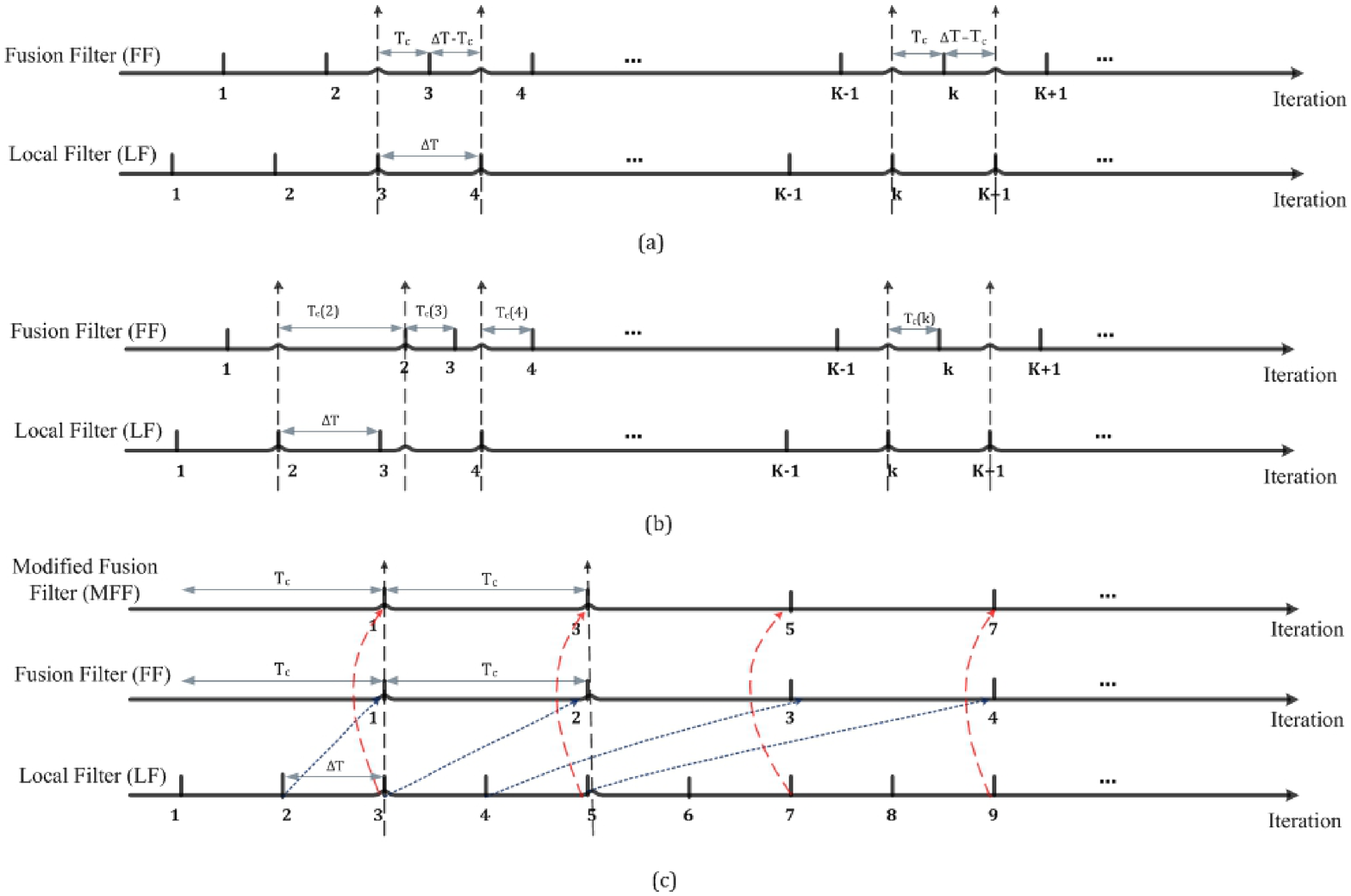}}
\caption{
Multi-rate implementation of the local and fusion filters.
(a) The ideal scenario where the fusion filter's consensus step converges before the new iteration of the local filter. (b)  The convergence rate of the fusion filter varies according to the network connectivity. (c) The lag between the fusion filter and the local filter grows exponentially. }
\label{mff}
\end{figure}
 First, let us introduce the notation. We assume that the observations arrive at constant time intervals of $\Delta T$. Each iteration of the local filters is performed within this interval, which we will refer to as the local filter's estimation interval.
The duration (the fusion filters's estimation interval) of the  update cycle of the fusion filter is denoted by $T_c$.
Fig.~\ref{mff} illustrates three scenarios dealing with different fusion filter's estimation intervals.
Fig.~\ref{mff}(a) is the ideal scenario where $T_c \leq \Delta T$ and the fusion filter's consensus step converges before the new iteration of the local filter.
In such a scenario, the local and fusion filters stay synchronized.
Fig.~\ref{mff}(b) shows the second scenario when the convergence rate of the fusion filter varies according to the network connectivity.
Under regular connectivity $T_c < \Delta T$ and limited connectivity loesses, the fusion filters will manage to catch up with the localized filters in due time.
Fig.~\ref{mff}(c) considers a more problematic scenario when $T_c > \Delta T$. Even with ideal connectivity, the fusion filter will  continue to lag the localized filters with no hope of its catching up. The bottom two timing diagrams in Fig.~\ref{mff}(c) refer to this scenario with $T_c =2 \Delta T$. As illustrated, the lag between the fusion filter and the localized filters grows exponentially with time in this scenario.
An improvement to the fusion filter is suggested in the top timing diagram of Fig.~\ref{mff}(c), where the fusion filter uses the most recent local filtering density of the localized filters. This allows the fusion filter to catch up with the localized filter even for cases $T_c > \Delta T$. Such a modified fusion implementation requires an updated fusion rule for the global posterior density, which is considered next.

At iteration $k+m$, we assume that node $l$, for ($1 \leq l \leq N$), has a particle-based approximation of the local filtering distributions $P(\x(k\!+\!m)|\z^{(l)}(1\!:\!k\!+\!m))$, while its fusion filter has a particle-based approximation of the global posterior distribution $P(\x(0\!:\!k)|\z(1\!:\!k))$ for iteration $k$.
In other words, the fusion filters are lagging the localized filters by $m$ iterations. In
the conventional fusion filter the statistics of $P(\x(k\!+\!1)|\z^{(l)}(1\!:\!k\!+\!1))$, for ($1 \leq l \leq N$) are used in the next consensus step of the fusion filter which then computes the global posterior $P(\x(0\!:\!k\!+\!1)|\z(1\!:\!k\!+\!1))$ based on Theorem~\ref{fprotocol}.
The modified~fusion filter uses the most recent local filtering distributions $P(\x(k\!+\!m)|\z^{(l)}(1\!:\!k\!+\!m))$ according to Theorem~\ref{fprotocol4}.
\begin{thm}\label{fprotocol4}
Conditioned on the state variables, assume that the observations
made at node $l$ are independent of the observations made at node $j$,
($j \neq l$). The global posterior distribution for a $N$--sensor
network at iteration $k\!+\!m$ is then given by
\begin{eqnarray} \label{ar-new1}
\lefteqn{ P\left(\x(0\!:\!k\!+\!m)|\z(1\!:\!k\!+\!m)\right) \propto \nonumber} \\
&&~~~\prod_{l=1}^N \frac{\prod_{k'=k+1}^{k+m} P
\left( \x(k')|\z^{(l)}(1\!:\!k')\right)}{\prod_{k'=k+1}^{k+m} P
\left( \x(k')|\z^{(l)}(1\!:\!k'\!-\!1)\right)}
\prod_{k'=k+1}^{k+m} P\left(\x(k')|\x(k'\!-\!1)\right)
 \times P\left(\x(0\!:\!k)|\z(1\!:\!k)\right).
\end{eqnarray}
\end{thm}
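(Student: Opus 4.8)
The plan is to prove Theorem~\ref{fprotocol4} by induction on the lag $m$, using the single-step fusion identity of Theorem~\ref{fprotocol} as the building block. For the base case $m=1$, Eq.~\eqref{ar-new1} collapses to the single factor $k'=k+1$ in each product, so it reads $P(\x(0\!:\!k\!+\!1)|\z(1\!:\!k\!+\!1)) \propto \prod_{l=1}^N \frac{P(\x(k\!+\!1)|\z^{(l)}(1\!:\!k\!+\!1))}{P(\x(k\!+\!1)|\z^{(l)}(1\!:\!k))} P(\x(k\!+\!1)|\x(k)) \, P(\x(0\!:\!k)|\z(1\!:\!k))$. This is exactly Theorem~\ref{fprotocol} re-indexed with $k \to k\!+\!1$: the ratio of products equals the product of ratios, and the trailing term $P(\x(0\!:\!k\!+\!1)|\z(1\!:\!k))$ in Eq.~\eqref{chap3:sec2:eq8} factorizes through Eq.~\eqref{chap5:sec3:part1:eq2} as $P(\x(k\!+\!1)|\x(k))P(\x(0\!:\!k)|\z(1\!:\!k))$. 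Hence the base case is immediate.

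For the inductive step I would assume Eq.~\eqref{ar-new1} holds for some $m \geq 1$ and establish it for $m\!+\!1$. First I apply Theorem~\ref{fprotocol}, with its time index shifted from $k$ to $k\!+\!m\!+\!1$, to peel off the newest observation:
\begin{eqnarray*}
\lefteqn{P\left(\x(0\!:\!k\!+\!m\!+\!1)|\z(1\!:\!k\!+\!m\!+\!1)\right) \propto \frac{\prod_{l=1}^N P\left(\x(k\!+\!m\!+\!1)|\z^{(l)}(1\!:\!k\!+\!m\!+\!1)\right)}{\prod_{l=1}^N P\left(\x(k\!+\!m\!+\!1)|\z^{(l)}(1\!:\!k\!+\!m)\right)}} \\
&& \times\, P\left(\x(k\!+\!m\!+\!1)|\x(k\!+\!m)\right) \, P\left(\x(0\!:\!k\!+\!m)|\z(1\!:\!k\!+\!m)\right).
\end{eqnarray*}
I then substitute the inductive hypothesis for the trailing factor $P(\x(0\!:\!k\!+\!m)|\z(1\!:\!k\!+\!m))$. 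The likelihood-ratio products over $k'=k\!+\!1,\ldots,k\!+\!m$ from the hypothesis absorb the single new ratio at $k'=k\!+\!m\!+\!1$, merging into one product from $k\!+\!1$ to $k\!+\!m\!+\!1$; likewise $\prod_{k'=k+1}^{k+m}P(\x(k')|\x(k'\!-\!1))$ absorbs the new transition factor $P(\x(k\!+\!m\!+\!1)|\x(k\!+\!m))$ and extends to $k\!+\!m\!+\!1$. The residual term $P(\x(0\!:\!k)|\z(1\!:\!k))$ is carried through unchanged, which yields Eq.~\eqref{ar-new1} with $m$ replaced by $m\!+\!1$ and closes the induction.

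The step requiring the most care is the bookkeeping of proportionality constants. Each invocation of Theorem~\ref{fprotocol} suppresses a normalizing factor equal to the evidence increment $P(\z(k\!+\!m\!+\!1)|\z(1\!:\!k\!+\!m))$, and what matters is that every such factor depends only on the observation sequence and not on the state trajectory $\x(0\!:\!k\!+\!m\!+\!1)$. Since the symbol $\propto$ in Eq.~\eqref{ar-new1} is understood as proportionality in the state argument, these observation-only constants may be freely collected across the $m$ nested applications. The conditional independence across sensors, inherited directly from the hypothesis of Theorem~\ref{fprotocol}, guarantees that the single-step numerator factorization $\prod_{l} P(\x(k')|\z^{(l)}(1\!:\!k'))$ remains valid at every intermediate time $k'$. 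I expect no analytic obstacle beyond verifying that this index-merging and constant-tracking are consistent at each iteration of the recursion.
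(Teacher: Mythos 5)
Your proof is correct and follows essentially the same route as the paper's: the paper's Appendix proof applies the one-step fusion rule of Theorem~\ref{fprotocol} at iteration $k\!+\!m$ and then ``continues to expand'' the trailing posterior back to iteration $k\!+\!1$, which is precisely your induction unrolled in the backward direction. Your explicit induction framing and the remark that all suppressed normalization factors depend only on the observations merely make rigorous what the paper states informally.
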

The proof of Theorem~\ref{fprotocol4} is included in Appendix~\ref{app:D}.
In the consensus step of the modified fusion filter, two average consensus algorithms are used to compute
$\prod_{l=1}^N \prod_{k'=k+1}^{k+m} P( \x(k')|\z^{(l)}(1\!:\!k'))$ and $\prod_{l=1}^N \prod_{k'=k+1}^{k+m} P( \x(k')|\z^{(l)}(1\!:\!k'\!-\!1))$, i.e.,
\begin{eqnarray} \label{mff:eq1}
\prod_{l=1}^N \prod_{k'=k+1}^{k+m} P\left( \x(k')|\z^{(l)}(1\!:\!k')\right) \propto
\prod_{l=1}^N \mathcal{N}\big(\bm{\mu}^{(l)}(k\!+\!1\!:\!k\!+\!m),\bm{P}^{(l)}(k\!+\!1\!:\!k\!+\!m)\big)
\end{eqnarray}
\begin{eqnarray}
\mbox{ and }
\prod_{l=1}^N \prod_{k'=k+1}^{k+m} P
\left( \x(k')|\z^{(l)}(1\!:\!k'\!-\!1)\right) \propto
\prod_{l=1}^N \mathcal{N}\big(\bm{\upsilon}^{(l)}(k\!+\!1\!:\!k\!+\!m),\bm{R}^{(l)}(k\!+\!1\!:\!k\!+\!m)\big),
\end{eqnarray}
instead of  computing $\prod_{l=1}^N P(\x(k)|\z^{(l)}(1\!:\!k))$ and $\prod_{l=1}^N P( \x(k)|\z^{(l)}(1\!:\!k\!-\!1))$ as was the case for the conventional fusion filter.
The modified fusion filter starts with a set of particles $ \cx_i^{(\MGF,l)}(k), W_i^{(\MGF,l)}(k)$ approximating $P(\x(0\!:\!k)|\z(1\!:\!k))$ and generates updated particles $\cx_i^{(\MGF,l)}(k\!+\!m), W_i^{(\MGF,l)}(k\!+\!m)$ for $P(\x(0\!:\!k\!+\!m)|\z(1\!:\!k\!+\!m))$  using the following weight update equation
\begin{eqnarray} \label{ar-new2}
W^{(l,\MGF)}_i(k\!+\!m)\!\propto\! W^{(l,\MGF)}_i(k) \!\times\!
\frac{\prod_{k'=k+1}^{k+m} P\left(\cx^{(l,\MGF)}_i(k')|\cx^{(l,\MGF)}_i(k'\!-\!1)\right) }
{\mathcal{N}\big(\cx^{(l,\MGF)}_i(k\!+\!m) ; \bm{\upsilon}(k\!+\!1\!:\!k\!+\!m),\bm{R}(k\!+\!1\!:\!k\!+\!m)\big)},
\end{eqnarray}
which is obtained directly from Eq.~(\ref{ar-new1}).
Note that the normal approximation in Eqs.~(\ref{mff:eq1})--(\ref{ar-new2}) are similar to the ones used in the conventional fusion filter.
Furthermore, we note that the modification requires prediction of the particles from iteration $k$ all the way to $k\!+\!m$ in order to evaluate the second term on the right hand side of Eq.~(\ref{ar-new2}). Algorithm~\ref{algo:Modifiedfusionfilter} outlines this step and summarizes the modified fusion filter.
%=================================================
%=================================================
\begin{algorithm*}[t!]
\caption{\textproc{~Modified Fusion Filter}
}
\label{algo:Modifiedfusionfilter}

\begin{algorithmic}[1]
\Input $\{\cx^{(l,\MGF)}_i(k), W^{(l,\MGF)}_i(k)\}_{i=1}^{N_{l,\MGF}}$ -
Fusion filter's particles and associated weights.
\Output $\{\cx^{(l,\MGF)}_i(k\!+\!m), W^{(l,\MGF)}_i(k\!+\!m)\}_{i=1}^{N_s}$ updated particles and associated weights.

%\For{$l=1:N,$}

\For{$k'=k\!+\!1:k+m,$}

$\mathcal{N}\big(\bm{\mu}^{(l)}(k'),\bm{P}^{(l)}(k') \big) =$ SaveGaussian$\left(\{ \cx_i^{(l)}(k'), W_i^{(l)}(k')\}_{i=1}^{N_s}\right)$

$\mathcal{N}\big(\bm{\upsilon}^{(l)}(k'),\bm{R}^{(l)}(k') \big) =$ SaveGaussian$\left(\{ \cx_i^{(l)}(k'\!+\!1|k'), W_i^{(l)}(k')\}_{i=1}^{N_s}\right)$ \EndFor

\State $\mathcal{N}\big(\bm{\mu}^{(l)}(k\!+\!1\!:\!k\!+\!m),\bm{P}^{(l)}(k\!+\!1\!:\!k\!+\!m)\big) =$ SaveGaussian$\left(  \prod_{k'=k\!+\!1}^{k+m} \mathcal{N}\big(\bm{\mu}^{(l)}(k'),\bm{P}^{(l)}(k') \big) \right)$.

\State $\mathcal{N}\big(\bm{\upsilon}^{(l)}(k\!+\!1\!:\!k\!+\!m),\bm{R}^{(l)}(k\!+\!1\!:\!k\!+\!m)\big) =$ SaveGaussian$\left(  \prod_{k'=k\!+\!1}^{k+m} \mathcal{N}\big(\bm{\upsilon}^{(l)}(k'),\bm{R}^{(l)}(k') \big) \right)$.

%\EndFor

\State $\{\bm{\mu}^{(l,\MGF)}(k\!+\!1\!\!:\!k\!+\!m),
\bm{P}^{(l,\MGF)}(k\!+\!1\!\!:\!k\!+\!m)\} \!=\! $  DoFusion$\left(\{ \bm{\mu}^{(l)}(k\!+\!1\!\!:\!k\!+\!m),
\bm{P}^{(l)}(k\!+\!1\!\!:\!k\!+\!m)\}_{l=1}^{N}\right)$.% numerator of Eq.~(\ref{ar-new1}).

\State $\{\bm{\upsilon}^{(l,\MGF)}(k\!+\!1\!\!:\!k\!+\!m),
\bm{R}^{(l,\MGF)}(k\!+\!1\!\!:\!k\!+\!m)\} \!=\! $  DoFusion$\left(\{ \bm{\upsilon}^{(l)}(k\!+\!1\!\!:\!k\!+\!m),
\bm{R}^{(l)}(k\!+\!1\!\!:\!k\!+\!m)\}_{l=1}^{N}\right)$.% numerator of Eq.~(\ref{ar-new1}).

%\For{$l=1:N,$}

\For{$i=1:N_{\GF},$}

\For{$k'=k\!+\!1:k\!+\!m\!-\!1,$}

 $~~~~~~\cx^{(l,\MGF)}_i(k') \sim P\big(\x(k')|\cx_i^{(l,\MGF)}(k'\!-\!1)\big)$. \EndFor

$\cx^{(l,\MGF)}_i(k\!+\!m) \sim \mathcal{N}\big(\bm{\mu}^{(l,\MGF)}(k\!+\!1\!:\!k\!+\!m),\bm{P}^{(l,\MGF)}(k\!+\!1\!:\!k\!+\!m)\big)$.

 Compute weights $W_i^{(l,\MGF)}(k\!+\!m)$ using
Eq.~(\ref{ar-new2}).  \EndFor
%\EndFor
\end{algorithmic}
\end{algorithm*}
%
%=================================================
\section{The Posterior Cram\'er-Rao Lower Bound} \label{sec:pcrlb}
%=================================================
Considering the non-linear filtering problem modeled in Eqs.~(\ref{sec.3:eq.1.1}) and~(\ref{sec.3:eq.1.2}) and the
  posterior distribution (Eq.~(\ref{chap3:sec2:eq8})) used in developing
the CF/DPF, the section computes the Posterior Cram\'er-Rao lower bound (PCRLB) for the distributed architecture.
We note that the PCRLB is independent of the estimation mechanism and
the bound should be the same for both centralized and distributed architectures. The question is whether the centralized expressions for computing the PCRLB are applicable to compute the PCRLB for other topologies, i.e., the hierarchical and distributed (decentralized)  architectures.
Reference~\cite{Tharmarasa:2009} considers a hierarchical architecture  with a central fusion center and shows that the centralized expressions can be used directly for the hierarchical case. The same authors argue in~\cite{Tharmarasa:2011} that the centralized expressions are no longer applicable for  distributed/decentralized architectures.
The exact expression for computing the
PCRLB for the distributed architecture is not yet available and only an approximate expression~\cite{Tharmarasa:2011} has recently been derived.
In this section, we derive the exact expression for computing the
PCRLB for the distributed topology. 
We note that our result is not restricted to the particle filter or the CF/DPF but is also applicable to any other distributed estimation approach.

The PCRLB inequality~\cite{Tichavsky:1998} states that the mean square
error (MSE) of the estimate $\hat{\x}(0\!:\!k)$ of the state variables
$\x(0\!:\!k)$ is lower bounded by
\begin{eqnarray}
\mathbb{E} \{(\hat{\x}(0\!:\!k)\!-\!\x(0\!:\!k))
(\hat{\x}(0\!:\!k)\!-\!\x(0\!:\!k))^T\} \!\!\geq\!\! [\bm{J}(\x(0\!:\!k))]^{-1},\!\!\!\!
\end{eqnarray}
where $\mathbb{E}$ is the expectation operator.  Matrix
$\bm{J}(\x(0\!:\!k))$ is referred to as the FIM~\cite{Tichavsky:1998}
derived from the joint probability density function (PDF)
$P(\x(0\!:\!k), \z(1\!:\!k))$.  Let $\nabla$ and $\Delta$,
respectively, be operators of the first and second order partial
derivatives given by $\nabla_{\x(k)} = \big[\frac{\partial}{\partial
    X_1(k)}, \ldots, \frac{\partial}{\partial X_{n_x}(k)}\big]^T $ and
$\Delta^{\x(k)}_{\x(k-1)} = \nabla_{\x(k-1)}\nabla_{\x(k)}^T$.  One
form of the Fisher information matrix $\bm{J}(\x(0\!:\!k))$ is defined
as~\cite{Tichavsky:1998}
\begin{eqnarray}
\bm{J}\big(\x(0\!:\!k)\big) = \mathbb{E} \big\{ -\Delta^{\x(0:k)}_{\x(0:k)} \log P(\x(0\!:\!k), \z(1\!:\!k))\big\}.
\end{eqnarray}
An alternative expression for the FIM can be derived from the
factorization $P(\x(0\!:\!k), \z(1\!:\!k)) = P(\x(0\!:\!k)|
\z(1\!:\!k))\times P(\z(1\!:\!k))$. Since $P(\z(1\!:\!k))$ is
independent of the state variables, we have
\begin{eqnarray} \label{ext.crlb.2}
\bm{J}\big(\x(0\!:\!k)\big) = \mathbb{E} \big\{ -\Delta^{\x(0:k)}_{\x(0:k)} \log P(\x(0\!:\!k)|\z(1\!:\!k))\big\}.
\end{eqnarray}
We now describe the centralized sequential formulation of the FIM.

\subsection{Centralized computation of the PCRLB}

Decomposing $\x(0\!:\!k)$ as $\x(0\!:\!k) = [\x^T(0\!:\!k\!-\!1),
  \x^T(k)]^T$ in $\bm{J}\big(\x(0\!:\!k)\big)$, Eq.~(\ref{ext.crlb.2})
simplifies to
\begin{eqnarray} \label{cpcrlb}
&&\!\!\!\!\!\!\!\!\!\!\!\!\!\!\!\bm{J}\big(\x(0\!:\!k)\big) \triangleq \left[
\begin{array}{cc}
\bm{A}^{11}(k) & \bm{A}^{12}(k)  \\
\bm{A}^{21}(k) & \bm{A}^{22}(k) \\
\end{array} \right] = \mathbb{E}\Bigg\{ -1\times
\left[
\begin{array}{ll}\label{ext.crlb.4}
\Delta^{\x(0:k-1)}_{\x(0:k-1)}  & \Delta^{\x(k)}_{\x(0:k-1)} \\
\Delta^{\x(0:k-1)}_{\x(k)}  & \Delta^{\x(k)}_{\x(k)} \\
\end{array} \right] \log P\big(\x(0\!:\!k)|\z(1\!:\!k)\big) \Bigg\}, \nonumber
\end{eqnarray}
provided that the expectations and derivatives exist. The information submatrix $\bm{J}(\x(k))$ for estimating $\x(k)$ is given by the inverse of the ($n_x \times
n_x$) right-lower block of $\big[\bm{J}\big(\x(0\!:\!k)\big)\big]^{-1}$. The information submatrix is computed using the matrix
inversion lemma~\cite{Tichavsky:1998} and given by
\begin{eqnarray} \label{ext.crlb.8}
\bm{J}(\x(k)) = \bm{A}^{22}(k)-\bm{A}^{21}(k)\big(\bm{A}^{11}(k)\big)^{-1}\bm{A}^{12}(k).
\end{eqnarray}
Proposition~\ref{pro1}~\cite{Tichavsky:1998} derives
$\bm{J}(\x(k))$  recursively  without manipulating the large ($kn_x\times kn_x$)
matrix $\bm{A}^{11}(k)$. The initial condition is given by
$\bm{J}\big(\x(0)\big) = \mathbb{E} \{ -\Delta^{\x(0)}_{\x(0)}
\log P(\x(0))\}$.

\begin{prop}\label{pro1}
The sequence $\{ \bm{J}\big(\x(k)\big)\}$ of local posterior
information sub-matrices for estimating state vectors $\x(k)$ at node
$l$, for ($1 \leq l \leq N$), obeys the following recursion
\begin{eqnarray} \label{ext.crlb.1}
&&\!\!\!\!\!\!\!\!\!\!\!\!\bm{J}\big(\x(k+1)\big) =  \bm{D}^{22}(k)
-\bm{D}^{21}(k)\Big( \bm{J}\big(\x(k)\big)+
\bm{D}^{11}(k)\Big)^{-1}\bm{D}^{12}(k)
\end{eqnarray}
%where
%
\begin{eqnarray} \label{ext.crlb.10.1}
&&\!\!\!\!\!\!\!\!\!\!\!\!\!\!\!\!\!\!\!\!\!\mbox{where } \bm{D}^{11}(k) = \mathbb{E} \big\{ -\Delta^{\x(k)}_{\x(k)} \log P\big(\x(k+1)|\x(k)\big)\big\} \\
&&\!\!\!\!\!\!\!\!\!\!\!\!\!\!\!\!\!\!\!\!\!\bm{D}^{12}(k) = \big[ \!\bm{D}^{21}(k)\big]^T
\!\!\!\!= \mathbb{E}\big\{\!\!-\Delta^{\x(k+1)}_{\x(k)} \log P\big(\x(k+1)|\x(k)\big)\!\big\}~\\
&&\!\!\!\!\!\!\!\!\!\!\!\!\!\!\!\!\!\!\!\!\!\bm{D}^{22}(k) = \mathbb{E} \big\{ -\Delta^{\x(k+1)}_{\x(k+1)} \log P\big(\x(k+1)|\x(k)\big)\big\}+\underbrace{\mathbb{E} \big\{ -\Delta^{\x(k+1)}_{\x(k+1)} \log P\big(\z(k\!+\!1)|\x(k\!+\!1)\big)}_{\bm{J}(\z(k+1))}\big\}, \label{ext.crlb.10.2}
\end{eqnarray}
\end{prop}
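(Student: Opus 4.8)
The plan is to assemble the joint Fisher information matrix $\bm{J}(\x(0\!:\!k\!+\!1))$ for the augmented state and then recover $\bm{J}(\x(k\!+\!1))$ by a single Schur-complement step, following~\cite{Tichavsky:1998}. First I would factor the joint posterior using the Markov property of the state model~\eqref{sec.3:eq.1.1} together with the conditional independence of $\z(k\!+\!1)$ given $\x(k\!+\!1)$, obtaining
\[
P\big(\x(0\!:\!k\!+\!1)|\z(1\!:\!k\!+\!1)\big) \propto P\big(\x(0\!:\!k)|\z(1\!:\!k)\big)\, P\big(\x(k\!+\!1)|\x(k)\big)\, P\big(\z(k\!+\!1)|\x(k\!+\!1)\big).
\]
Applying $-\Delta^{\x(0:k+1)}_{\x(0:k+1)}\log(\cdot)$ and taking the expectation as in~\eqref{ext.crlb.2}, the three log-factors distribute across disjoint sets of second-order derivatives: $\log P(\x(0\!:\!k)|\z(1\!:\!k))$ depends only on $\x(0\!:\!k)$ and reproduces $\bm{J}(\x(0\!:\!k))$ in the leading block; $\log P(\x(k\!+\!1)|\x(k))$ couples $\x(k)$ and $\x(k\!+\!1)$ and yields $\bm{D}^{11}(k)$, $\bm{D}^{12}(k)=[\bm{D}^{21}(k)]^T$ and the first summand of $\bm{D}^{22}(k)$; and $\log P(\z(k\!+\!1)|\x(k\!+\!1))$ depends only on $\x(k\!+\!1)$ and supplies $\bm{J}(\z(k\!+\!1))$, completing $\bm{D}^{22}(k)$ as in~\eqref{ext.crlb.10.2}.

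Next I would write $\bm{J}(\x(0\!:\!k\!+\!1))$ in $2\times2$ block form with respect to the partition $\x(0\!:\!k\!+\!1)=[\x^T(0\!:\!k),\x^T(k\!+\!1)]^T$. The crucial structural observation is that the transition density couples $\x(k\!+\!1)$ only to $\x(k)$, so the off-diagonal blocks are sparse: the $(1,2)$ block is $[\bm{0},\ldots,\bm{0},\bm{D}^{12}(k)]^T$ and likewise for $(2,1)$, with all coupling confined to the $\x(k)$ sub-block of $\x(0\!:\!k)$. The $(1,1)$ block equals $\bm{J}(\x(0\!:\!k))$ with $\bm{D}^{11}(k)$ added in its bottom-right $(n_x\times n_x)$ corner, and the $(2,2)$ block equals $\bm{D}^{22}(k)$. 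As in~\eqref{ext.crlb.8}, the information submatrix for $\x(k\!+\!1)$ is the Schur complement
\[
\bm{J}(\x(k\!+\!1)) = \bm{D}^{22}(k) - \bm{D}^{21}(k)\,\big[\bm{M}\big]^{-1}\,\bm{D}^{12}(k),
\]
where $\big[\bm{M}\big]^{-1}$ denotes the bottom-right $(n_x\times n_x)$ block of the inverse of the $(1,1)$ block, the sparsity of the off-diagonal blocks being exactly what reduces the full matrix product to this single block.

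The main obstacle, and the step deserving care, is showing that $\bm{M}=\bm{J}(\x(k))+\bm{D}^{11}(k)$, i.e.\ that the relevant block of the inverse of the modified $(1,1)$ block collapses to the previous-step information matrix plus $\bm{D}^{11}(k)$. Writing $\bm{J}(\x(0\!:\!k))$ in the block form $\bm{A}^{\cdot\cdot}(k)$ of~\eqref{cpcrlb}, the $(1,1)$ block leaves $\bm{A}^{11}(k)$ unchanged while its bottom-right corner becomes $\bm{A}^{22}(k)+\bm{D}^{11}(k)$; computing the bottom-right block of its inverse by the matrix inversion lemma gives $\big[\bm{A}^{22}(k)+\bm{D}^{11}(k)-\bm{A}^{21}(k)(\bm{A}^{11}(k))^{-1}\bm{A}^{12}(k)\big]^{-1}$. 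Since the definition~\eqref{ext.crlb.8} identifies $\bm{A}^{22}(k)-\bm{A}^{21}(k)(\bm{A}^{11}(k))^{-1}\bm{A}^{12}(k)$ as precisely $\bm{J}(\x(k))$, the additive term $\bm{D}^{11}(k)$ passes straight through the Schur complement, yielding $\bm{M}=\bm{J}(\x(k))+\bm{D}^{11}(k)$ and hence the stated recursion~\eqref{ext.crlb.1}. I would close by recording the base case $\bm{J}(\x(0))=\mathbb{E}\{-\Delta^{\x(0)}_{\x(0)}\log P(\x(0))\}$, so that the recursion is fully specified without ever forming or inverting the large matrix $\bm{A}^{11}(k)$.
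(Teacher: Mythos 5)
Your proof is correct and follows exactly the approach the paper relies on: the paper defers Proposition~1 to Tichavsky \emph{et al.}~\cite{Tichavsky:1998}, whose argument is precisely your factorization of the joint posterior, the sparse $3\times 3$ block structure of $\bm{J}(\x(0\!:\!k\!+\!1))$, and the Schur-complement step in which $\bm{D}^{11}(k)$ adds onto $\bm{J}(\x(k))$ via Eq.~(\ref{ext.crlb.8}). The same machinery is reproduced in the paper's own Appendix~\ref{app:C} proof of Theorem~\ref{pro3}, so your reconstruction matches the intended derivation in both substance and detail.
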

The proof of Proposition~\ref{pro1} is given in~\cite{Tichavsky:1998}.
\noindent%
Conditioned on the state variables, the observations made at different nodes are independent
,therefore, $\bm{J}(\z(k+1))$~\cite{Hernandez:2002} in Eq.~(\ref{ext.crlb.10.2}) is simplifies to
\begin{eqnarray}
\label{xxxx}
\bm{J}(\z(k+1)) &\!\!\!=\!\!\!& \sum_{l=1}^{N} \bm{J}(\z^{(l)}(k+1)) = \sum_{l=1}^{N} \mathbb{E} \big\{ -\Delta^{\x(k+1)}_{\x(k+1)} \log P\big(\z^{(l)}(k\!+\!1)|\x(k\!+\!1)\big)\}. \nonumber
\end{eqnarray}
In other words, the expression for $\bm{J}(\x(k+1))$
(Eq.~(\ref{ext.crlb.1})) requires distributed information (sensor
measurement) only for computing $\bm{J}(\z(k+1))$. Other terms in
Eq.~(\ref{ext.crlb.1}) can be computed locally.  In the next section,
we derive the distributed PCRLB.

\vspace{.05in}
\noindent
\textit{B. Distributed computation of the PCRLB}

In the sequel, $\bm{J}^{(l)}\big(\x(0\!:\!k)\big)$, for ($1\leq l \leq
N$), denotes the local FIM corresponding to the local estimate of
$\x(0\!:\!k)$ derived from the local posterior density $P(\x(0\!:\!k)|\z^{(l)}(1\!:\!k))$.  Similarly,
$\bm{J}^{(l)}\big(\x(0\!:\!k\!+\!1|k)\big)$ denotes the local FIM
corresponding to the local prediction estimate of $\x(0\!:\!k\!+\!1)$
derived from the local prediction density $P(\x(0\!:\!k\!+\!1),
\z^{(l)}(1\!:\!k))$. The expressions for
$\bm{J}^{(l)}\big(\x(0\!:\!k)\big)$ and
$\bm{J}^{(l)}\big(\x(0\!:\!k\!+\!1|k)\big)$ are similar in nature to
Eq.~(\ref{cpcrlb}) except that the posterior density
$P(\x(0\!:\!k)|\z(1\!:\!k))$ is replaced by their corresponding local
posteriors.
The local FIM $\bm{J}^{(l)}\big(\x(k)\big)$ is given by the
inverse of the ($n_x \times n_x$) right-lower block of
$\big[\bm{J}^{(l)}\big(\x(0\!:\!k)\big)\big]^{-1}$. Similarly, the
prediction FIM $\bm{J}^{(l)}\big(\x(k\!+\!1|k)\big)$ is given by the
inverse of the ($n_x \times n_x$) right-lower block of
$\big[\bm{J}^{(l)}\big(\x(0\!:\!k\!+\!1|k)\big)\big]^{-1}$.

The problem we wish to solve is to compute the global information
sub-matrix, denoted by $\bm{J}\big(\x(k\!+\!1)\big)$, as a function of
the local FIMs $\bm{J}^{(l)}\big(\x(k\!+\!1)\big)$ and local
prediction FIMs $\bm{J}^{(l)}\big(\x(k\!+\!1|k)\big)$, for ($1 \leq l
\leq N$). Note that $\bm{J}^{(l)}\big(\x(k)\big)$ can be updated
sequentially using Eqs.~(\ref{ext.crlb.1})-(\ref{ext.crlb.10.2}) where
$\bm{J}(\z^{(l)}(k+1))$ replaces $\bm{J}(\z(k+1))$ in
Eq.~(\ref{ext.crlb.10.2}). Proposition~\ref{pro2} derives a recursive formula for
computing $\bm{J}^{(l)}\big(\x(k\!+\!1|k)\big)$, i.e., the FIM for the
local prediction distribution.
\begin{prop}\label{pro2}
The sequence $\{ \bm{J}^{(l)}\big(\x(k\!+\!1|k)\big) \}$ of the local
prediction information sub-matrices for predicting state vectors
$\x(k)$ at node $l$, for ($1 \leq l \leq N$), follows the
recursion
\begin{eqnarray}\label{ext.crlb.6}
\bm{J}^{(l)}\big(\x(k\!+\!1|k)\big) =  \bm{B}^{22}(k)-\bm{D}^{21}(k)\big( \bm{J}^{(l)}\big(\x(k)\big)+
\bm{D}^{11}(k)\big)^{-1}\bm{D}^{12}(k)
\end{eqnarray}
where $\bm{J}^{(l)}\big(\x(k)\big)$ is given by
Eq.~(\ref{ext.crlb.1}), $D^{11}(k)$, $D^{12}(k)$, and $D^{21}(k)$ are
given by Eqs.~(\ref{ext.crlb.10.1})-(\ref{ext.crlb.10.2}) and
\begin{eqnarray}\label{ext.crlb.7-1}
\bm{B}^{22}(k) = \mathbb{E} \big\{ -\Delta^{\x(k+1)}_{\x(k+1)} \log P\big(\x(k+1)|\x(k)\big)\big\}, \label{ext.crlb.7-3}
\end{eqnarray}
\end{prop}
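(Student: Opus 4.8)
The plan is to mirror the derivation of Proposition~\ref{pro1} of Tichavsky \textit{et al.}, the only structural change being that the local prediction density conditions on observations up to time $k$ rather than $k\!+\!1$. First I would write the local prediction FIM as
\[
\bm{J}^{(l)}\big(\x(0\!:\!k\!+\!1|k)\big) = \mathbb{E}\big\{-\Delta^{\x(0:k+1)}_{\x(0:k+1)} \log P\big(\x(0\!:\!k\!+\!1),\z^{(l)}(1\!:\!k)\big)\big\}
\]
and, after dropping the state-independent factor $P(\z^{(l)}(1\!:\!k))$ as in Eq.~(\ref{ext.crlb.2}), exploit the Markov factorization $P(\x(0\!:\!k\!+\!1)|\z^{(l)}(1\!:\!k)) = P(\x(k\!+\!1)|\x(k))\,P(\x(0\!:\!k)|\z^{(l)}(1\!:\!k))$, so that the log splits into a transition term and the local posterior term.

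Next I would decompose $\x(0\!:\!k\!+\!1) = [\x^T(0\!:\!k),\,\x^T(k\!+\!1)]^T$ exactly as in Eq.~(\ref{cpcrlb}) and read off the four blocks. The $(1,1)$ block collects $\bm{J}^{(l)}(\x(0\!:\!k))$ from the posterior term plus a contribution from the transition term that is supported only on the $\x(k)$--$\x(k)$ sub-block, namely $\bm{D}^{11}(k)$; the off-diagonal blocks come solely from the transition term and vanish except in the $\x(k)$ position, yielding $\bm{D}^{12}(k)$ and $\bm{D}^{21}(k)$. The crucial difference from Proposition~\ref{pro1} appears in the $(2,2)$ block: because the prediction density contains no observation at time $k\!+\!1$, the term $\mathbb{E}\{-\Delta^{\x(k+1)}_{\x(k+1)} \log P(\z(k\!+\!1)|\x(k\!+\!1))\}$ is absent, so this block equals $\bm{B}^{22}(k)$ rather than $\bm{D}^{22}(k)$.

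I would then obtain $\bm{J}^{(l)}(\x(k\!+\!1|k))$ as the Schur complement of the $(1,1)$ block, exactly as Eq.~(\ref{ext.crlb.8}) extracts the information sub-matrix. Since $\bm{D}^{12}(k)$ and $\bm{D}^{21}(k)$ touch only the trailing block of $\x(0\!:\!k)$, only the lower-right sub-block of the inverse of the $(1,1)$ block survives the product. The key lemma-within-the-proof is that this lower-right sub-block equals $(\bm{J}^{(l)}(\x(k)) + \bm{D}^{11}(k))^{-1}$: writing the $(1,1)$ block as $\bm{J}^{(l)}(\x(0\!:\!k))$ with $\bm{D}^{11}(k)$ added to its trailing corner and taking a further Schur complement, the off-diagonal coupling of $\bm{J}^{(l)}(\x(0\!:\!k))$ reproduces $\bm{J}^{(l)}(\x(k))$ by the very definition of the local information sub-matrix, while the additive $\bm{D}^{11}(k)$ passes through unchanged. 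Substituting then gives the claimed recursion~(\ref{ext.crlb.6}).

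The main obstacle is not any single hard calculation but the careful bookkeeping of the two nested Schur complements: one must verify that appending $\bm{D}^{11}(k)$ to the trailing corner of $\bm{J}^{(l)}(\x(0\!:\!k))$ shifts its Schur complement by exactly $\bm{D}^{11}(k)$, which relies on $\bm{D}^{11}(k)$, $\bm{D}^{12}(k)$ and $\bm{D}^{21}(k)$ all being confined to the $\x(k)$ index. The conceptual point to get right, rather than a computational hurdle, is the replacement of $\bm{D}^{22}(k)$ by $\bm{B}^{22}(k)$, which is precisely the signature of predicting \emph{without} the $(k\!+\!1)$-th measurement.
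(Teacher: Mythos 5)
Your proof is correct and takes essentially the same route as the paper: the paper's Appendix~\ref{app:B} proof simply invokes the factorization $P\big(\x(0\!:\!k\!+\!1)|\z^{(l)}(1\!:\!k)\big) = P\big(\x(k\!+\!1)|\x(k)\big)\,P\big(\x(0\!:\!k)|\z^{(l)}(1\!:\!k)\big)$ and defers to the block-decomposition/Schur-complement argument of Theorem~\ref{pro3}, which is exactly what you carry out. Your two key observations --- that the $(2,2)$ block is $\bm{B}^{22}(k)$ rather than $\bm{D}^{22}(k)$ because no measurement at $k\!+\!1$ enters the prediction density, and that adding $\bm{D}^{11}(k)$ to the trailing corner of $\bm{J}^{(l)}\big(\x(0\!:\!k)\big)$ shifts the relevant inverse sub-block to $\big(\bm{J}^{(l)}(\x(k))+\bm{D}^{11}(k)\big)^{-1}$ --- are precisely the steps in the paper's Eqs.~(\ref{ext.crlb.11-2})--(\ref{ext.crlb.12}) adapted to the local prediction distribution.
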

The proof of Proposition~\ref{pro2} is included in Appendix~\ref{app:B}.
Theorem~\ref{pro3} is our main result. It provides the
exact recursive formula for
computing the distributed FIM corresponding to the global estimation from the
local FIMs $\bm{J}^{(l)}(\x(k))$ and local prediction FIMs $\bm{J}^{(l)}(\x(k\!+\!1))$.
\begin{thm}\label{pro3}
The sequence $\{ \bm{J}\big(\x(k)\big) \}$ of information sub-matrices
corresponding to global estimates follows the  recursion
\begin{eqnarray}\label{ext.crlb.13}
\bm{J}\big(\x(k\!+\!1)\big) = \bm{C}^{22}(k)-
\bm{D}^{21}(k)\big( \bm{J}\big(\x(k)\big)+\bm{D}^{11}(k)\big)^{-1}\bm{D}^{12}(k)
\end{eqnarray}
where $\bm{D}^{11}(k)$, $\bm{D}^{21}(k)$, and $\bm{D}^{12}(k)$ are
given by Eqs.~(\ref{ext.crlb.10.1})-(\ref{ext.crlb.10.2}) and
\begin{eqnarray}\label{ext.crlb.14.1}\label{ext.crlb.14.3}
\bm{C}^{22}(k) \!=\! \sum_{l=1}^N \bm{J}^{(l)}(\x(k\!+\!1)) - \sum_{l=1}^N \bm{J}^{(l)}(\x(k\!+\!1|k))
+ \mathbb{E} \big\{ \!\!-\!\!\Delta^{\x(k+1)}_{\x(k+1)} \log P\big(\x(k+1)|\x(k)\big)\big\},
\end{eqnarray}
where $\bm{J}^{(l)}(\x(k\!+\!1))$ and $\bm{J}^{(l)}(\x(k\!+\!1|k))$
are defined in Prepositions~\ref{pro1} and~\ref{pro2}, respectively.
\end{thm}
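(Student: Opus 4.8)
The plan is to show that the distributed recursion~(\ref{ext.crlb.13}) is \emph{identical} to the centralized recursion~(\ref{ext.crlb.1}) of Proposition~\ref{pro1}, so that the entire theorem reduces to the single matrix identity $\bm{C}^{22}(k) = \bm{D}^{22}(k)$. Once that identity is established, nothing further is needed: the quadratic correction term in~(\ref{ext.crlb.13}) already coincides verbatim with the one in~(\ref{ext.crlb.1}) — both retain the \emph{global} prior $\bm{J}(\x(k))$ and the same transition-model blocks $\bm{D}^{11}(k)$, $\bm{D}^{12}(k)$, $\bm{D}^{21}(k)$ — so Proposition~\ref{pro1}, which is proved in~\cite{Tichavsky:1998}, supplies the rest. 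The whole content of the theorem therefore lies in correctly identifying the leading term $\bm{C}^{22}(k)$ from the local quantities.

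The key step is to subtract the local prediction recursion from the local filtering recursion. Applying Proposition~\ref{pro1} at node $l$, with $\bm{J}(\z^{(l)}(k\!+\!1))$ replacing $\bm{J}(\z(k\!+\!1))$ and the local prior $\bm{J}^{(l)}(\x(k))$ seeding the recursion, I would write the local filtering FIM as
\[
\bm{J}^{(l)}(\x(k\!+\!1)) = \bm{B}^{22}(k) + \bm{J}(\z^{(l)}(k\!+\!1)) - \bm{D}^{21}(k)\big(\bm{J}^{(l)}(\x(k)) + \bm{D}^{11}(k)\big)^{-1}\bm{D}^{12}(k),
\]
where $\bm{B}^{22}(k) = \mathbb{E}\{-\Delta^{\x(k+1)}_{\x(k+1)}\log P(\x(k\!+\!1)|\x(k))\}$ is the process-model part of $\bm{D}^{22,(l)}(k)$. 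Proposition~\ref{pro2} supplies the local prediction FIM $\bm{J}^{(l)}(\x(k\!+\!1|k))$ with the \emph{same} leading constant $\bm{B}^{22}(k)$ and, crucially, the \emph{same} quadratic term, since that term depends only on the common transition density and on the identical local prior $\bm{J}^{(l)}(\x(k))$. Subtracting therefore cancels both $\bm{B}^{22}(k)$ and the quadratic correction, leaving
\[
\bm{J}^{(l)}(\x(k\!+\!1)) - \bm{J}^{(l)}(\x(k\!+\!1|k)) = \bm{J}(\z^{(l)}(k\!+\!1)).
\]
Summing over $l=1,\dots,N$ and invoking the conditional independence of the observations across nodes, exactly as in~(\ref{xxxx}), gives
\[
\sum_{l=1}^N \bm{J}^{(l)}(\x(k\!+\!1)) - \sum_{l=1}^N \bm{J}^{(l)}(\x(k\!+\!1|k)) = \bm{J}(\z(k\!+\!1)).
\]
Adding the process term $\bm{B}^{22}(k)$ to both sides and comparing with~(\ref{ext.crlb.10.2}) shows that the definition of $\bm{C}^{22}(k)$ in~(\ref{ext.crlb.14.1}) equals $\bm{B}^{22}(k) + \bm{J}(\z(k\!+\!1)) = \bm{D}^{22}(k)$, which is the desired identity; substituting it into~(\ref{ext.crlb.1}) reproduces~(\ref{ext.crlb.13}).

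The main obstacle — really the only place where care is needed — is justifying the exact cancellation of the quadratic terms. This hinges on the fact that $\bm{J}^{(l)}(\x(k\!+\!1))$ and $\bm{J}^{(l)}(\x(k\!+\!1|k))$ are built from the \emph{same} previous-step local information $\bm{J}^{(l)}(\x(k))$ and the \emph{same} transition blocks $\bm{D}^{11}(k),\bm{D}^{12}(k),\bm{D}^{21}(k)$, so that only the observation-dependent leading block survives the subtraction. It is also worth emphasising what makes the result \emph{exact} rather than approximate, as in~\cite{Tharmarasa:2011}: the global prior $\bm{J}(\x(k))$ carried in the quadratic term of~(\ref{ext.crlb.13}) is never replaced by any combination of the local priors $\bm{J}^{(l)}(\x(k))$, so no independence assumption on the cross-node estimation errors is introduced; the only independence used is the conditional independence of the observations, which enters solely through the exact additive decomposition of $\bm{J}(\z(k\!+\!1))$.
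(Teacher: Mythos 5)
Your proof is correct, but it takes a genuinely different route from the paper's. The paper proves the theorem structurally: it substitutes the fusion rule of Theorem~\ref{fprotocol} into $\log P(\x(0\!:\!k\!+\!1)|\z(1\!:\!k\!+\!1))$, writes the joint FIM $\bm{J}(\x(0\!:\!k\!+\!1))$ as a $3\times 3$ block matrix in the partition $(\x(0\!:\!k\!-\!1),\x(k),\x(k\!+\!1))$, identifies the lower-right block as $\bm{C}^{22}(k)$ and the off-diagonal blocks as $\bm{D}^{12}(k),\bm{D}^{21}(k)$ (with zeros elsewhere), and then extracts $\bm{J}(\x(k\!+\!1))$ by the matrix inversion lemma, exactly mirroring Tichavsky's derivation. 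You instead reduce everything to the scalar-level identity $\bm{C}^{22}(k)=\bm{D}^{22}(k)$: subtracting the Proposition~\ref{pro2} recursion from the node-level Proposition~\ref{pro1} recursion (both seeded by the same $\bm{J}^{(l)}(\x(k))$ and sharing the same transition blocks) gives $\bm{J}^{(l)}(\x(k\!+\!1))-\bm{J}^{(l)}(\x(k\!+\!1|k))=\bm{J}(\z^{(l)}(k\!+\!1))$ exactly, and summing over $l$ with the conditional-independence decomposition $\bm{J}(\z(k\!+\!1))=\sum_l \bm{J}(\z^{(l)}(k\!+\!1))$ collapses Eq.~(\ref{ext.crlb.13}) into Eq.~(\ref{ext.crlb.1}). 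Each approach buys something: the paper's block expansion exhibits \emph{why} $\bm{C}^{22}(k)$ is the natural fused leading term and does not presuppose consistency between the global and node-level recursions, but it silently identifies the expected negative Hessians of the local marginal filtering/prediction densities with the recursively defined submatrices $\bm{J}^{(l)}(\cdot)$ (the ``it can be shown that'' step); your argument is purely algebraic on quantities already defined by the recursions, so it sidesteps that identification entirely and makes the exactness claim (the global prior $\bm{J}(\x(k))$ is never replaced by local priors, unlike Eq.~(\ref{extention.5})) completely transparent. The mild cost of your route is that it portrays the theorem as a restatement of Proposition~\ref{pro1}, whereas its practical content --- that $\bm{C}^{22}(k)$ is assembled from purely local FIMs via two consensus-computable sums --- is what the paper's constructive derivation puts in the foreground.
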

The proof of Theorem~\ref{pro3} is included in Appendix~\ref{app:C}.
In~\cite{Tharmarasa:2011}, an approximate updating equation based~on
the information filter (an alternative form of the Kalman filter) is proposed for computing $\bm{J}(\x(k\!+\!1))$ at node $l$ which 
is represented in our notation as follows
\begin{eqnarray}\label{extention.2}
\hat{\bm{J}}(\x(k\!+\!1)) \!=\! \bm{J}^{(l)}(\x(k\!+\!1)) + \sum_{j \neq l} \bigg(\bm{J}^{(j)}(\x(k\!+\!1))\!-\! \bm{J}^{(j)}(\x(k\!+\!1|k))\bigg).
\end{eqnarray}
Term $\bm{J}^{(l)}(\x(k\!+\!1))$ is given by
\begin{eqnarray} \label{extention.3}
&&\!\!\!\!\!\!\!\!\!\!\!\!\bm{J}^{(l)}\big(\x(k+1)\big) =  \bm{D}^{22}(k)
-\bm{D}^{21}(k)\Big( \bm{J}^{(l)}\big(\x(k)\big)+
\bm{D}^{11}(k)\Big)^{-1}\bm{D}^{12}(k).
\end{eqnarray}
%
%============================
Subtracting Eq.~(46) from the above equation and rearranging, we get
\begin{eqnarray}\label{extention.3.2}
 \bm{D}^{22}(k)= \bigg(\bm{J}^{(l)}(\x(k\!+\!1))\!-\! \bm{J}^{(l)}(\x(k\!+\!1|k))\bigg) + \bm{B}^{22}(k).
\end{eqnarray}
Substituting $\bm{B}^{22}(k)$ from Eq.~(47) in Eq.~\eqref{extention.3.2} and then substituting the resulting expression of $ \bm{D}^{22}(k)$ in Eq.~\eqref{extention.3}, we get
\begin{eqnarray}\label{extention.4}
\lefteqn{\!\!\!\!\!\!\!\!\!\!\!\!\!\!\!\!\!\!\!\!\!\!\!\!\!\!\!\!\!\!\!\!\!\!\!\!\!\!\!\!\!\!\!\!\!\!\!\!\!\!\!\!\!\!\!\!\bm{J}^{(l)}\big(\x(k\!+\!1)\big) \!\!=\!\!  \bigg(\bm{J}^{(l)}(\x(k\!+\!1))\!-\! \bm{J}^{(l)}(\x(k\!+\!1|k))\bigg) \!\!+\!\!  \mathbb{E} \big\{ -\Delta^{\x(k+1)}_{\x(k+1)} \log P\big(\x(k+1)|\x(k)\big)\big\} \nonumber}\\
&&-\bm{D}^{21}(k)\big( \bm{J}^{(l)}\big(\x(k)\big)+
\bm{D}^{11}(k)\big)^{-1}\bm{D}^{12}(k).
\end{eqnarray}
Substituting Eq~(\ref{extention.4}) in Eq.~(\ref{extention.2}) results in the following approximated fused FIM at node $l$
\begin{eqnarray}\label{extention.5}
\hat{\bm{J}}\big(\x(k\!+\!1)\big) = \bm{C}^{22}(k)-
\bm{D}^{21}(k)\big( \bm{J}^{(l)}\big(\x(k)\big)+\bm{D}^{11}(k)\big)^{-1}\bm{D}^{12}(k).
\end{eqnarray}
Note the differences between Eqs.~(\ref{extention.5}) and~\eqref{ext.crlb.13}. The second term in the right hand side of Eq.~\eqref{extention.5}
is based on the previous local FIM $\bm{J}^{(l)}(\x(k))$ at node $l$ thus making it node-dependent,
while the corresponding term in Eq.~\eqref{ext.crlb.13} is based on the overall FIM from the previous iteration.
When the PCRLB is computed in a distributed manner, Eq.~\eqref{extention.5} differs from one node to another and is, therefore,
not conducive for deriving the overall PCRLB through consensus. To make Eq.~\eqref{extention.2} node independent,
our simulations also compare the proposed exact PCRLB with another approximate expression, which only includes the first two terms of $\bm{C}^{22}(k)$ in Eq.~(\ref{ext.crlb.14.1})~i.e.,
\begin{eqnarray}\label{app.1}
\hat{\bm{J}}(\x(k\!+\!1)) \cong \bm{C}^{22}(k) \cong \sum_{l=1}^N \bigg(\bm{J}^{(l)}(\x(k\!+\!1))\!-\! \bm{J}^{(l)}(\x(k\!+\!1|k))\bigg).
\end{eqnarray}
The expectation terms in
Eqs.~(\ref{ext.crlb.10.1})-(\ref{ext.crlb.10.2}),~(\ref{ext.crlb.7-1}),
and~(\ref{ext.crlb.14.1}) can be further simplified for additive
Gaussian noise, i.e., when $\bm{\xi}(\cdot)$ and
$\bm{\zeta}^{(l)}(\cdot)$ are normally distributed with zero mean and
covariance matrices $\bm{Q}(k)$ and $\bm{R}^{(l)}(k)$, respectively.
In this case, we have
\begin{eqnarray} \label{ext.crlb.15.1}
\!\!\!\!\!\!\!\!\!\!\!\!\!\!\!\!\!\!&& \bm{D}^{11}(k) \!=\! \mathbb{E} \big\{ \big[\nabla_{\x(k)} \bm{f}^T(k) \big] \bm{Q}^{-1}(k) [\nabla_{\x(k)} \bm{f}^T(k) \big]^T \big\}\\
\!\!\!\!\!\!\!\!\!\!\!\!\!\!\!\!\!\!&& \bm{D}^{12}(k) \!=\! - \mathbb{E} \big\{ \big[\nabla_{\x(k)} \bm{f}^T(k) \big] \bm{Q}^{-1}(k) = \big[\bm{D}^{21}(k)\big]^T \\
\!\!\!\!\!\!\!\!\!\!\!\!\!\!\!\!\!\!&& \bm{D}^{22}(k) \!=\! \bm{Q}^{-1}(k) + \mathbb{E} \big\{ \big[\nabla_{\x(k+1)} \bm{g}^{(l)^T}(k\!+\!1) \big] \bm{R}^{(l)^{-1}}(k\!+\!1) [\nabla_{\x(k+1)} \bm{g}^{(l)^T}(k\!+\!1) \big]^T \big\}\\
\!\!\!\!\!\!\!\!\!\!\!\!\!\!\!\!\!\!&& \bm{B}^{22}(k) \!=\! \bm{Q}^{-1}(k),
\end{eqnarray}
\begin{eqnarray}
\mbox{and} \quad \quad\!\!\!\!\!\!\!\!\!\!\!\!\!\!\!\!\!\!&& \bm{C}^{22}(k) = \sum_{l=1}^N \!\!\bm{J}^{(l)}(\x(k\!+\!1))\!-\!\! \sum_{l=1}^N \!\!\bm{J}^{(l)}(\x(k\!+\!1|k)) + \bm{Q}^{-1}(k). \label{ext.crlb.15.2}~~~
\end{eqnarray}
Note that Theorem~\ref{pro3}
(Eqs.~(\ref{ext.crlb.13})-(\ref{ext.crlb.14.3})) provides a recursive
framework for computing the distributed PCRLB.  Further, the
proposed distributed PCRLB can be implemented in a distributed
fashion because Eq.~(\ref{ext.crlb.14.3}) has only two summation terms
involving local parameters. These terms can be computed in a
distributed manner using the average consensus
algorithms~\cite{Arash:ssp2011}. Other terms in
Eqs.~(\ref{ext.crlb.13})-(\ref{ext.crlb.14.3}) are dependent only on
the process model and can be derived locally.  In cases
(non-linear/non-Gaussian dynamic systems) where direct computation of
$\bm{D}^{11}(k)$, $\bm{D}^{12}(k)$, $\bm{D}^{21}(k)$,
$\bm{D}^{22}(k)$, $\bm{B}^{22}(k)$, and $\bm{C}^{22}(k)$ involves
high-dimensional integrations, particle filters can alternatively be used to compute these terms.
%=================================================
\section{Simulations: Bearing Only Target Tracking} \label{sec:simu}
%=================================================
A distributed bearing-only tracking (BOT)
application~\cite{Bearing:gordon} is simulated to test the proposed
CF/DPF. The BOT problem arises in a variety of nonlinear signal
processing applications including radar surveillance, underwater
submarine tracking in sonar, and robotics~\cite{Bearing:gordon}.
 The objective is to design a practical filter capable of estimating the state kinematics $\x(k) = [X(k), Y(k), \dot{X}(k), \dot{Y}(k)]$
(position $[X, Y]$ and velocity $[\dot{X} , \dot{Y}]$) of the target
from the bearing angle measurements and prior knowledge of the
target's motion.  The BOT is inherently a nonlinear application with
nonlinearity incorporated either in the state dynamics or in the
measurement model depending on the choice of the coordinate system
used to formulate the problem.
In this paper, we consider non-linear target kinematics with a non-Gaussian observation model.
A clockwise coordinated turn kinematic motion model~\cite{Bearing:gordon}  given by
\begin{eqnarray}
\bm{f}(\x(k)) = \left[
\begin{array}{cccc}
1 & 0 & \frac{\sin(\Omega(k)\Delta T)}{\Omega(k)} & -\frac{1-\cos(\Omega(k)\Delta T)}{\Omega(k} \\
0 & 1 & \frac{1-\cos(\Omega(k)\Delta T)}{\Omega(k)} & \frac{\sin(\Omega(k)\Delta T)}{\Omega(k)} \\
0 & 0 & \cos(\Omega(k)\Delta T)& -\sin(\Omega(k)\Delta T) \\
0 & 0 & \sin(\Omega(k)\Delta T)& \cos(\Omega(k)\Delta T) \\
\end{array} \right], \quad \mbox{with}\quad
\Omega(k) =
\frac{A_m}{\sqrt{(\dot{X}(k))^2+(\dot{Y}(k))^2}},
\end{eqnarray}
is considered with the manoeuvre acceleration parameter $A_m$ set to  $1.08\times10^{-5}\mbox{units/s}^2$~\cite{Bearing:gordon}.
A sensor network of $N = 20$ nodes with random geometric
graph model in a square region of dimension ($16 \times 16$) units is considered. Each sensor communicates only with its neighboring nodes within a connectivity radius of $\sqrt{2\log(N)/N}$ units.
In addition, the network is assumed to be connected with each node linked to  at least one other node in the network. Measurements are the target's bearings with respect to
the platform of
 each node (referenced clockwise positive to the $y$-axis), i.e.,
\begin{eqnarray}
  Z^{(l)}(k) &=& {\mbox{atan}\left(\frac{X(k)-X^{(l)}}{Y(k)-Y^{(l)}}\right)} + \zeta^{(l)}(k),
\end{eqnarray}
where $(X^{(l)}, Y^{(l)})$ are the coordinates of node $l$.
The observations are assumed to be corrupted by the non-Gaussian \textit{target glint noise}~\cite{Rong:2005} modeled as a
 mixture model of two zero-mean Gaussians~\cite{Rong:2005},
one with a high probability of occurrence and small variance
and  the other with relatively a small probability of occurrence and high variance. The likelihood model at node $l$, for ($1 \leq l \leq N$), is described as
\begin{eqnarray}
P(\z^{(l)}|\x(k)) =  (1-\epsilon) \times \mathcal{N}(\x; 0,\sigma^{2}_{\zeta^{(l)}}(k)) +\epsilon \times \mathcal{N}(\x; 0,10^4\sigma^{2}_{\zeta^{(l)}}(k)),
\end{eqnarray}
where $\epsilon = 0.09$ in the simulations.
Furthermore, the observation noise is assumed to be state dependent
such that the bearing noise variance $\sigma^{2}_{\zeta^{(l)}}(k)$ at
node $l$ depends on the distance $r^{(l)}(k)$ between the observer and
target. Based on~\cite{Chung:2005}, the variance of the observation noise at node $l$ is, therefore, given by
\begin{eqnarray}
\sigma^{2}_{\zeta^{(l)}}(k) = 0.08r^{(l)^2}(k) + 0.1150r^{(l)}(k) + 0.7405.
\end{eqnarray}
Due to state-dependent noise variance, we  note that the signal to noise ratio (SNR) is time-varying and differs (within a range of $-10$dB to $20$dB) from one sensor node to the other depending on the location of the target.
Averaged across all nodes and time, the mean SNR is $5.5$dB.
In our simulations, we chose to incorporate observations made at all nodes in the estimation, however, sensor selection based on the proposed distributed PCRLB can be used, instead, which will be considered~as future work.
Both centralized and distributed filters are initialized based on the procedure described~in~\cite{Bearing:gordon}.

\noindent
\textbf{Simulation Results:} The target starts from coordinates $(3, 6)$ units
The position of target the target ($[X, Y]$) in first three iterations are $(2.6904, 5.6209)$,  $(2.3932, 5.2321)$, and $(2.1098, 4.8318)$. The initial course is set at $-140^\circ$ with the standard deviation of the process noise
$\sigma_v = 1.6 \times 10^{-3}$ unit.
The number $N_s$ of vector particles for centralized implementation is $N_s=10,000$. The number $N_{\text{LF}}$ and $N_{\text{FF}}$ of vector particles used in each local filter and fusion filter is $500$. The number of particles for the CF/DPF  are selected to keep its computational complexity the same as that of the centralized implementation.
To quantify the tracking performance of the proposed methods two scenarios are considered as follows.

\noindent
\textbf{Scenario 1:}
accomplishes three goals. First, we compare the performance of the proposed CF/DPF
versus the centralized implementation. The fusion filters used in the CF/DPF are allowed to converge between two consecutive iterations of the localized particle filters (i.e., we follow the timing subplot (a) of Fig.~\ref{mff}).
Second, we compare the impact of the three proposal distributions listed in Section~III-E
%~\ref{sec:prop}
on the CF/DPF. The performance of the CF/DPF is computed for each of these proposal distributions using Monte Carlo simulations.
Third, we plot the proposed PCRLB computed from the distributed configuration and compare it with its counterpart obtained from the  centralized architecture that includes a fusion center.

Fig.~\ref{Target_track}(a) plots one realization of the target
track and the estimated tracks obtained from: (i) The CF/DPF; (ii) the centralized implementation, and; (iii) a single node estimation (stand alone case ).
In the CF/DPF, we used the Gaussian approximation of the optimal proposal distribution as the proposal distribution (Case $3$ in Section~III-E). 
The two estimates from the CF/DPF and the centralized implementation are fairly close to the
true trajectory of the target so much so as that they overlap.
The stand alone scenario based on running a particle filter at a single node (shown as the red circle in Fig.~\ref{Target_track}(a)) fails to track the target.
Fig.~\ref{Target_track}(b) plots the cumulative distribution function (CDF)
for the X-coordinate of the target estimated using the centralized and CF/DPF implementations for iterations $k = 5$ and $22$.
We note that the two CDFs are close to each other.
Fig.~\ref{Target_track} illustrates the near-optimal nature of the CF/DPF.
\begin{figure}[!t]
\mbox{\subfigure[]{\includegraphics[scale=0.58]{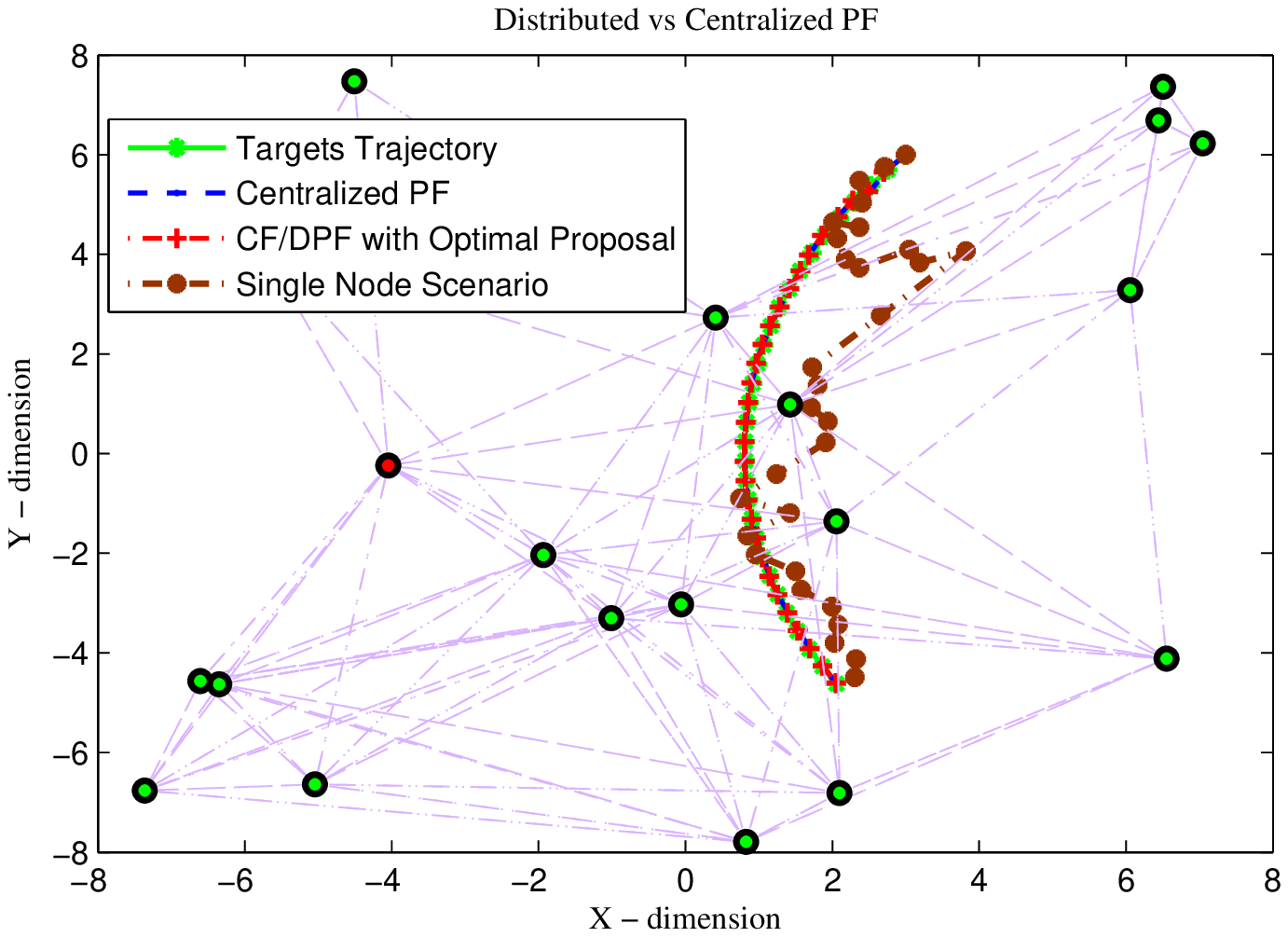}}
\subfigure[]{ \includegraphics[scale=0.55]{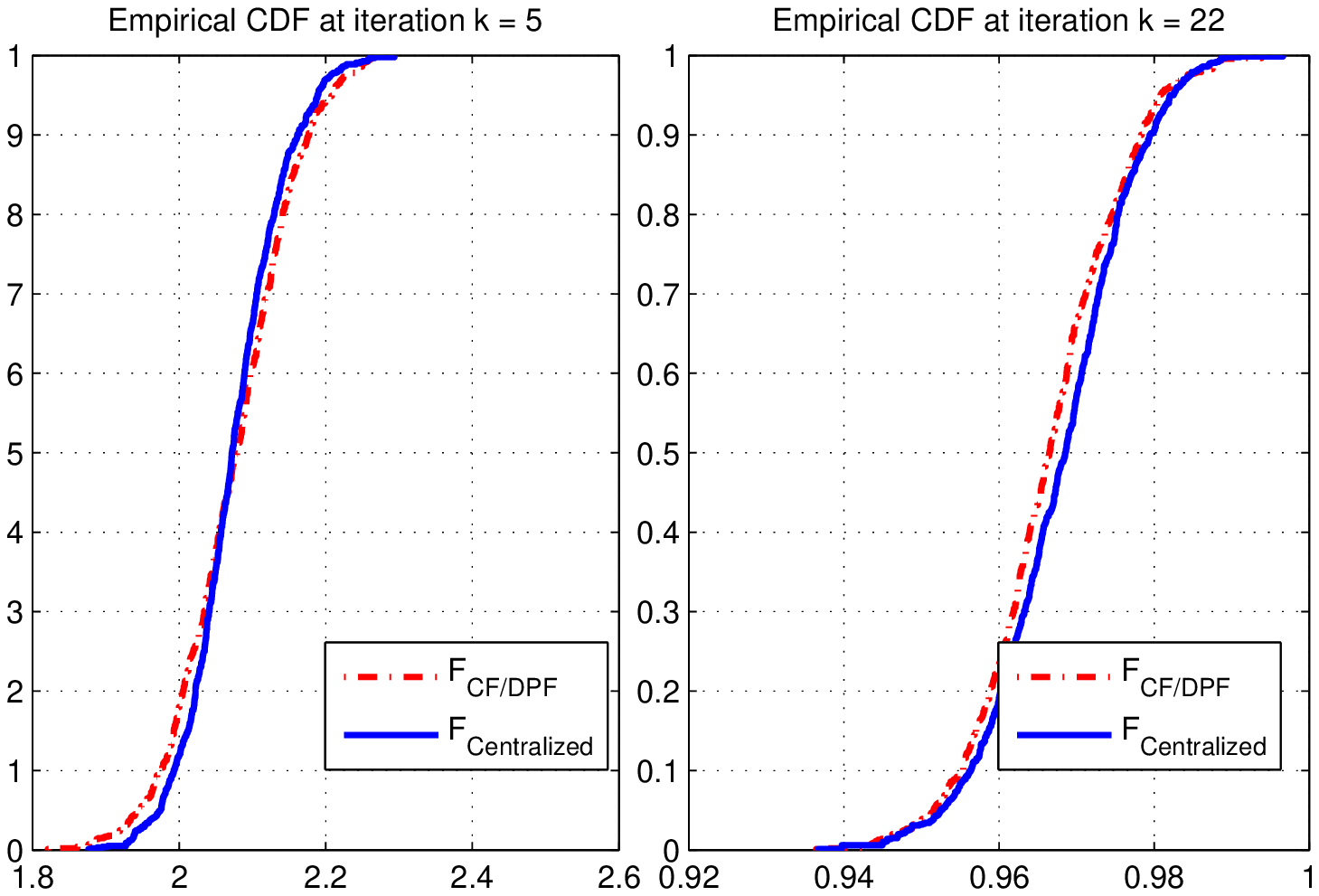}}}
\caption{\label{Target_track} Scenario 1:  (a) Actual target's track obtained from the centralized,
CF/DPF and stand-alone algorithms. Here, the consensus algorithm is allowed
to converge. (b) CDFs for the X-coordinate of the target estimated using the centralized
and CF/DPF for $k = 5, 22$.}
\end{figure}

Fig.~\ref{Target_RMS} compares the root mean square (RMS) error curves for the target's position.
Based on a Monte-Carlo simulation of $100$ runs,
Fig.~\ref{Target_RMS}(a) plots the RMS error curves for the estimated target's position via three
CF/DPF implementations obtained using different proposals distributions.
We observe that the SIR fusion filter performs the worst in this highly non-linear environment with non-Gaussian observation noise,  while the outputs of the centralized and the other two distributed implementations are fairly close to each other and approach the PCRLB.
Since the product fusion filter requires less computations, the simulations in Scenario~$2$ are based on the CF/DPF implementation using the product fusion filter.
Second, in Fig.~\ref{Target_RMS}(b), we compare the PCRLB obtained from the distributed and centralized architectures.
The Jacobian terms $\nabla_{\x(k)} \bm{f}^T(k)$ and $\nabla_{\x(k+1)}
\bm{g}^{(l)^T}(k\!+\!1)$, needed for the PCRLB, are computed using the procedure outlined in~\cite{Bearing:gordon}.
  It is observed that the bound obtained from the proposed
distributed computation of the PCRLB is more accurate and closer to the PCRLB computed from the centralized expressions.
As expected, the proposed decentralized PCRLB overlaps with its centralized
counterpart.
\begin{figure}[!t]
\centering
\mbox{\subfigure[]{ \includegraphics[scale=0.5]{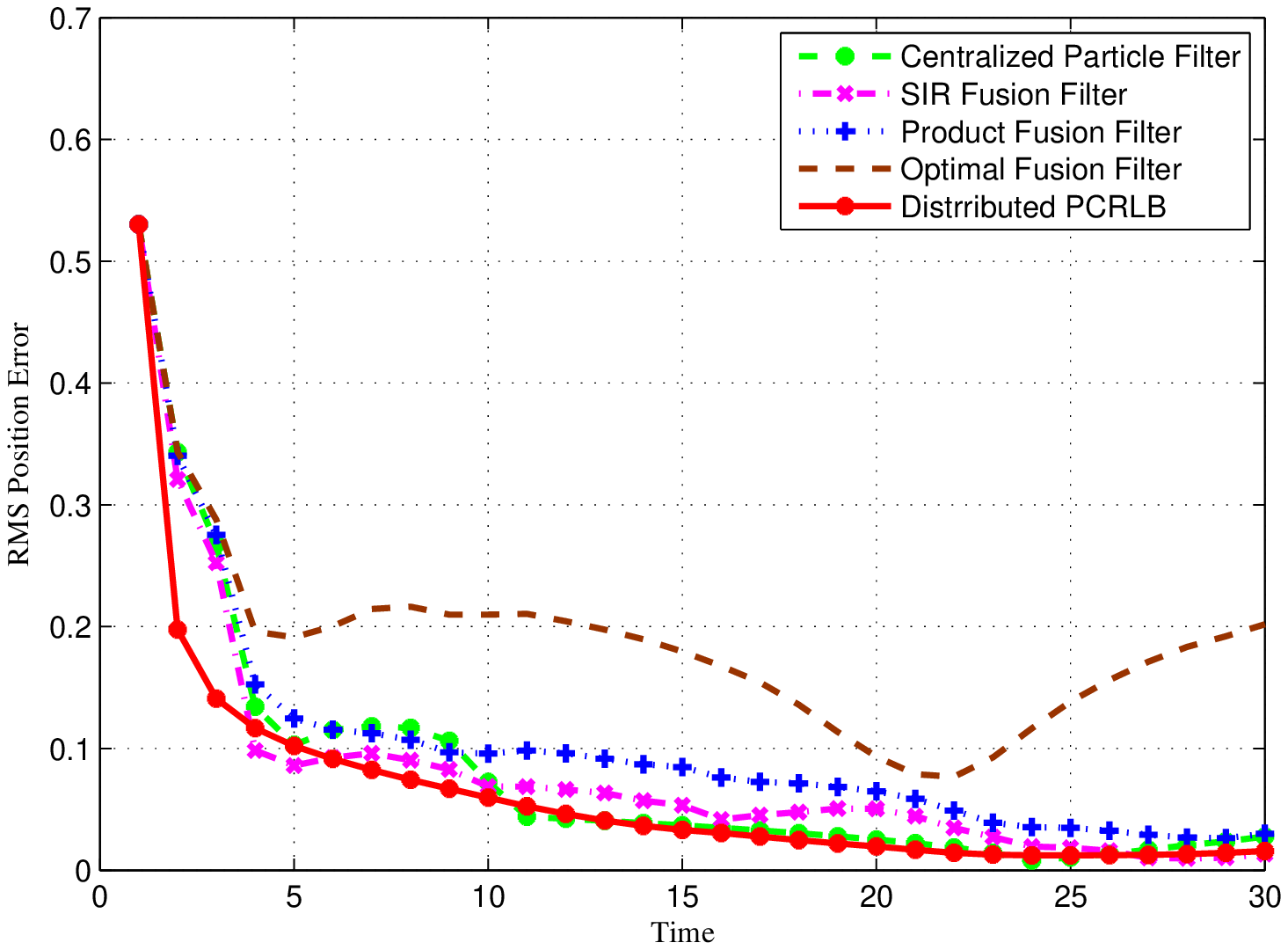}} \quad \subfigure[]{\includegraphics[scale=0.5]{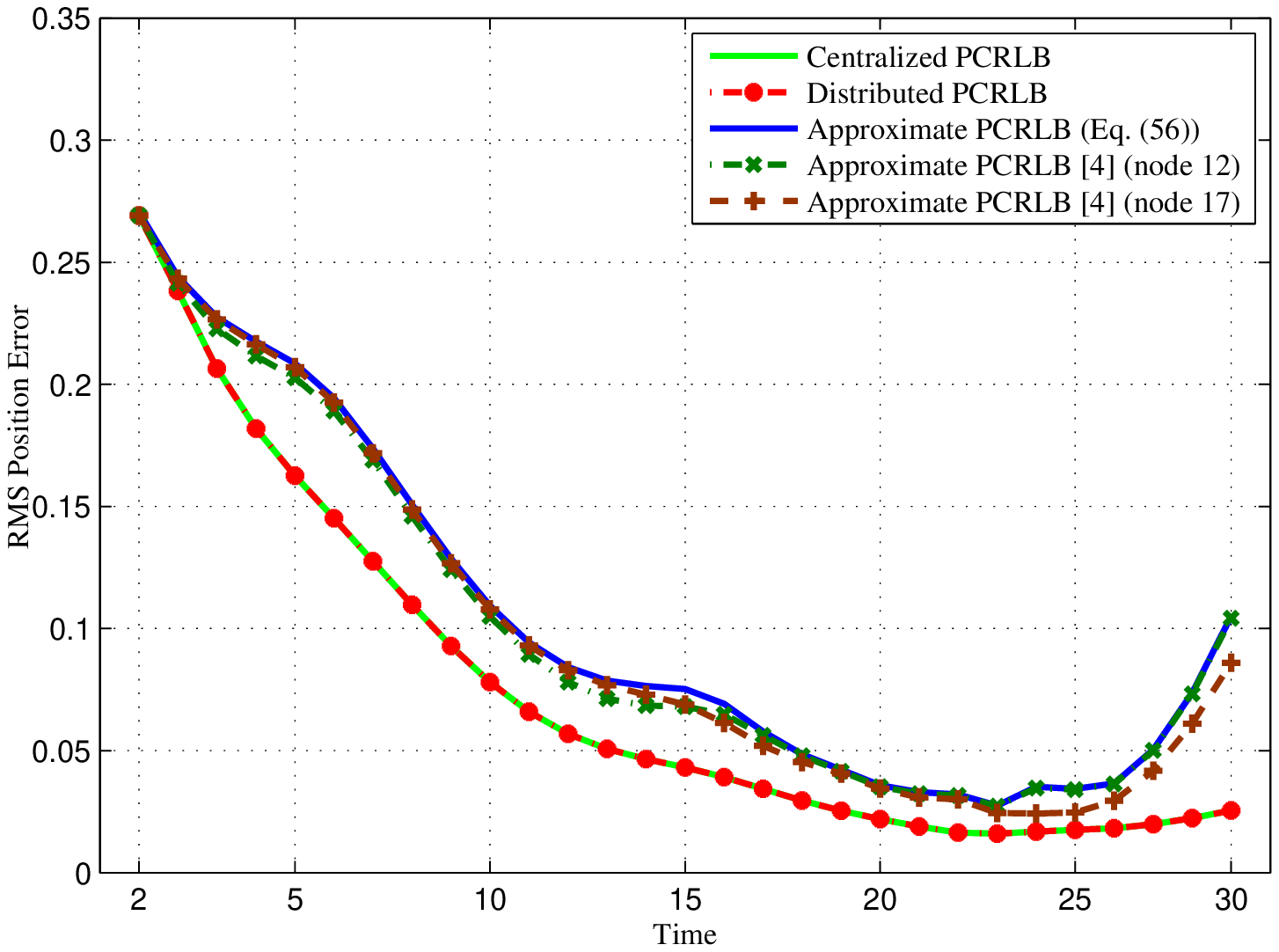}}}
\caption{\label{Target_RMS} Scenario 1:
 (a) Comparison of the RMS errors resulting from the centralized
  versus distributed implementations.  (b) Comparison of the proposed
  exact PCRLB, its approximated value based on~\cite{Tharmarasa:2011}
  (Eq.~\eqref{extention.5}) computed at node $12$ and $17$, and
  approximation of Eq.~\eqref{app.1}.  The PCRLBs computed using
  the centralized and distributed (proposed) expressions overlap so
  that they are virtually indistinguishable.
}
\end{figure}

\noindent
\textbf{Scenario 2:} The second scenario models the timing subplot (c)
of Fig.~\ref{mff}.  The convergence of the fusion filter takes up to
two iterations of the localized filters.  The original fusion filter
(Algorithm~1) is unable to converge within two consecutive iterations
of the localized particle filters.  Therefore, the lag between the
fusion filters and the localized filters in the CF/DPF continues to
increase exponentially.  The modified fusion filter described in
Algorithm~\ref{algo:Modifiedfusionfilter} is implemented to limit the
lag to two localized filter iterations.
The target's track are shown in Fig.~\ref{MFF_TT_nl}(a) for the
centralized implementation, original and modified fusion filter.
Fig.~\ref{MFF_TT_nl}(b) shows the RMS error curves for the target's
position including the RMS error resulting from Algorithm~$1$. Since consensus
is not reached in Algorithm~1, therefore, the fusion estimate is
different from one node to another. For Algorithm~$1$,
Fig.~\ref{MFF_TT_nl}(b) includes the RM error associated with a
randomly selected node. In Fig.~\ref{MFF_TT_nl}(b), Algorithm $1$
performs poorly due to consensus not reached, while the modified
fusion filter performs reasonably well. The performance of the
modified fusion filter remains close to its centralized counterpart,
therefore, it is capable of handling intermittent consensus~steps.

%===========================================

\noindent
\textbf{Scenario 3:}
%===========================================
%===========================================
%
\begin{figure}[!t]
\centering
\mbox{\subfigure[]{\includegraphics[scale=0.5]{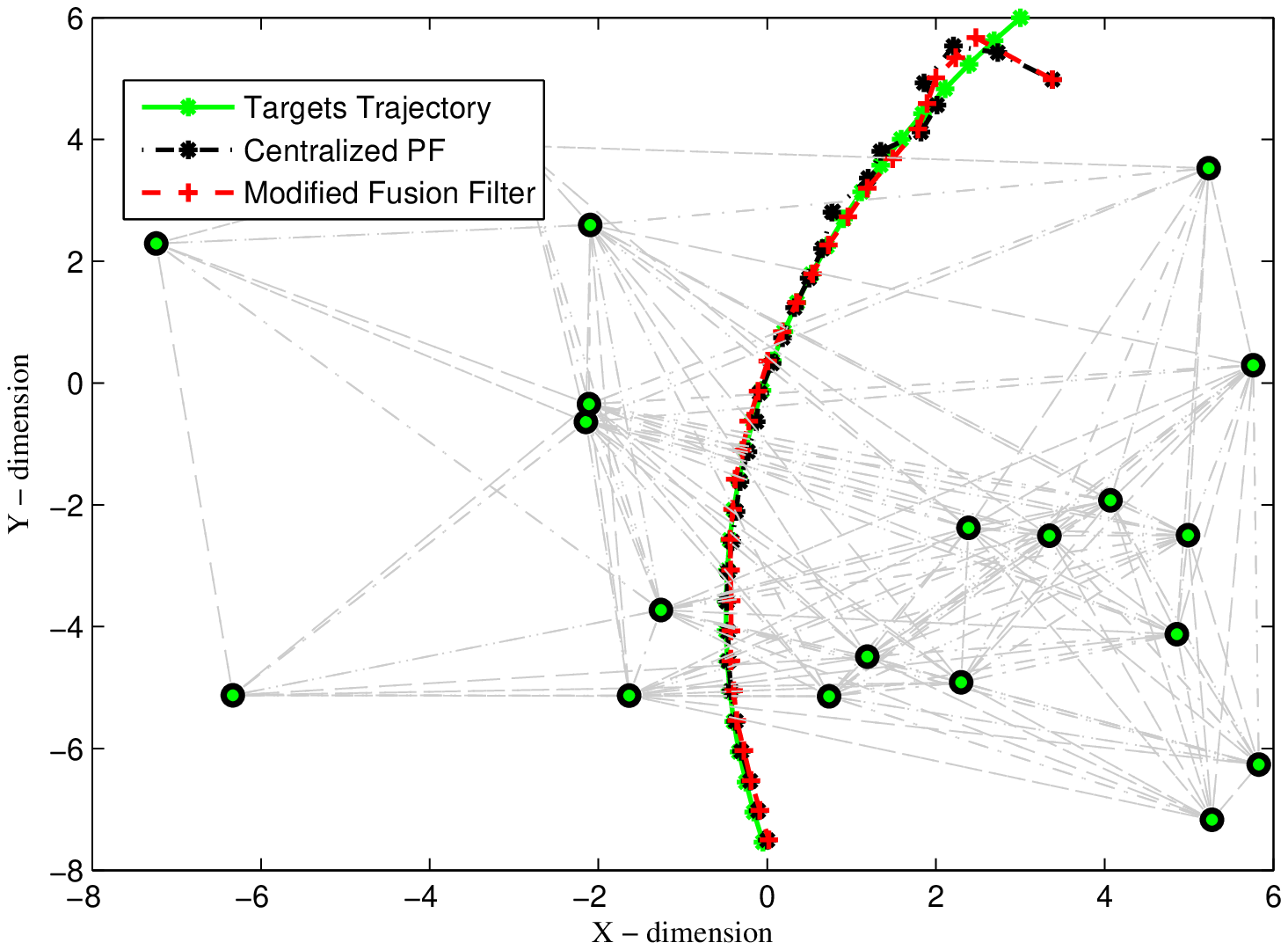}}\quad
\subfigure[]{ \includegraphics[scale=0.5]{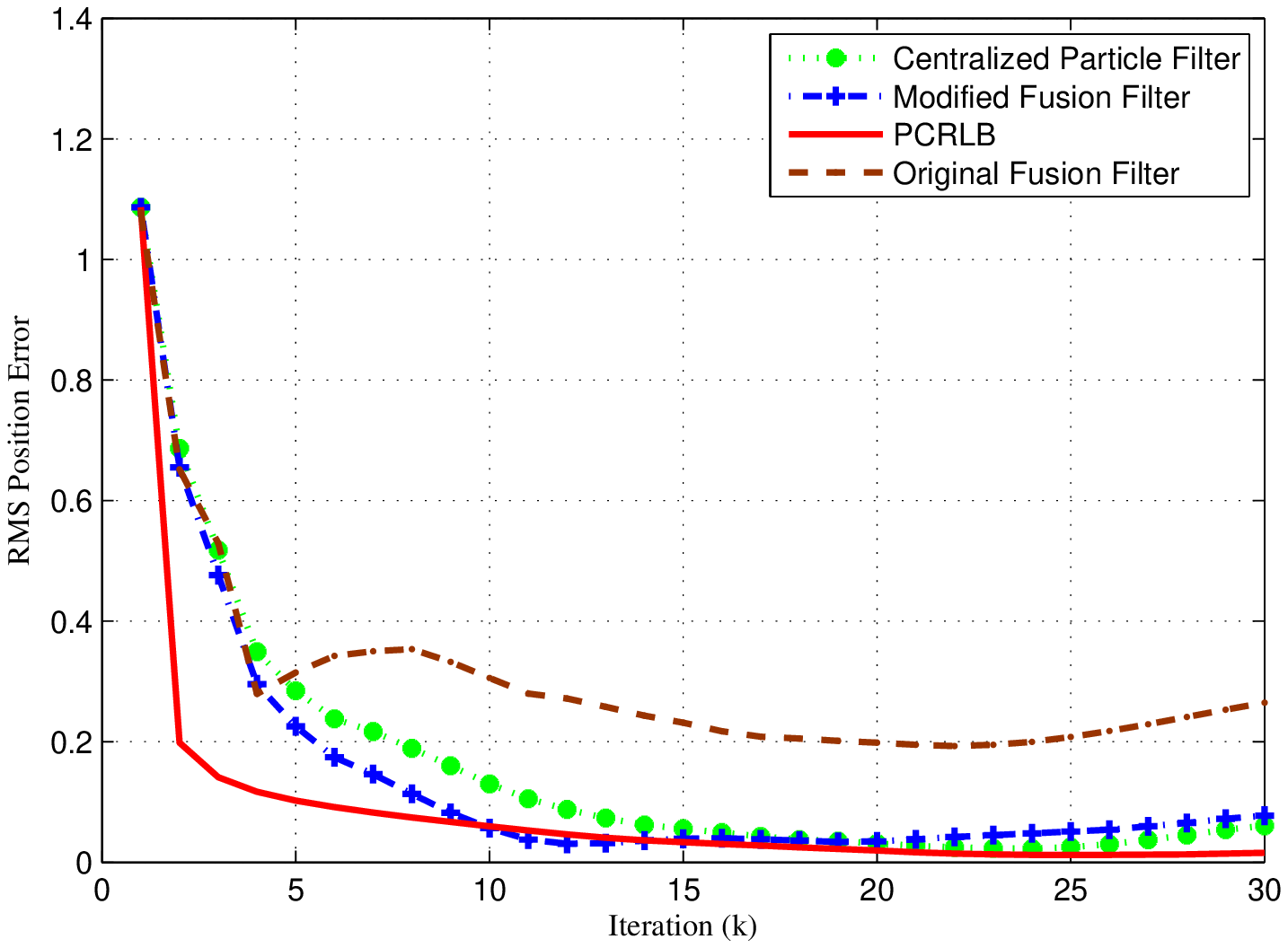}}}
\caption{\label{MFF_TT_nl} Scenario 2:  (a) Actual target's track alongside estimated tracks obtained from the centralized and modified fusion filter. Here, the consensus algorithm converges after each $2$ iterations of the local particle filters. (b) Comparison of the RMS errors resulting from the centralized, original fusion filter and  modified fusion filter.}
\end{figure}
In the third scenario, we consider a distributed mobile robot
localization problem~\cite{Simonetto:2010, Thurn:2005}
based on angle-only measurements. This is a good benchmark since the
underlying dynamics is non-linear with non-additive forcing terms resulting in a non-Gaussian transitional state model.
The state vector of the unicycle
robot is defined by $\x =[X, Y, \theta]^T$, where ($X, Y$) is the $2$D
coordinate of the robot and $\theta$ is its orientation.  The
velocity and angular velocity are denoted by $\tilde{V}(k)$ and $\tilde{W}(k)$,
respectively.  The following discrete-time non-linear unicycle model~\cite{Thurn:2005}
represents the state dynamics of the robot
\begin{eqnarray}\label{rob1}
X(k\!+\!1) &=& X(k)+\frac{\tilde{V}(k)}{\tilde{W}(k)}\left( \sin\Big( \theta(k)+\tilde{W}(k)\Delta T\Big) - \sin\big( \theta(k)\big)\right),  \\
Y(k\!+\!1) &=& Y(k)+\frac{\tilde{V}(k)}{\tilde{W}(k)}\left( \cos\Big( \theta(k)+\tilde{W}(k)\Delta T\Big) - \cos\big( \theta(k)\big)\right), \label{rob2}\\
\text{and }\theta(k\!+\!1) &=& \theta(k)+\tilde{W}(k)\Delta T+\xi_{\theta} \Delta T,
\end{eqnarray}
where $\Delta T$ is the sampling time and $\xi_{\theta}$ is the orientation
noise term.
The design parameters are: $\Delta T =1$, a mean velocity of $30$
cm/s with a standard deviation of $5$ cm/s, and a
mean angular velocity of $0.08$ rad/s with a standard deviation of $0.01$ rad.
Because of the presence of sine and cosine functions, the overall
state dynamics in Eq.~\eqref{rob1}-\eqref{rob2} are in effect
perturbed by non-Gaussian terms.  The observation model is similar to
the one described for Scenario~$1$ with non-Gaussian and
state-dependent observation noise.  The robot starts at coordinates
$(3,5)$. Fig.~\ref{robot} (a) shows one realization of the sensor
placement along with the estimated robot's trajectories obtained from
the proposed CF/DPF, centralized particle filter and distributed
unscented Kalman filter (UKF)~\cite{Simonetto:2010} implementations.
We observe that both centralized particle filter and CF/DPF clearly
follow the robot trajectory, while the distributed UKF deviates after
a few initial iterations.  Fig.~\ref{robot} (b) plots the RMS errors
obtained form a Monte-Carlo simulation of $100$ runs, which
corroborate our earlier observation that the CF/DPF and the
centralized particle filter provide better estimates that are close to
each other. The UKF produces a different result
with the highest error.
\begin{figure}[t]
\centering
\mbox{\subfigure[]{\includegraphics[scale=0.5]{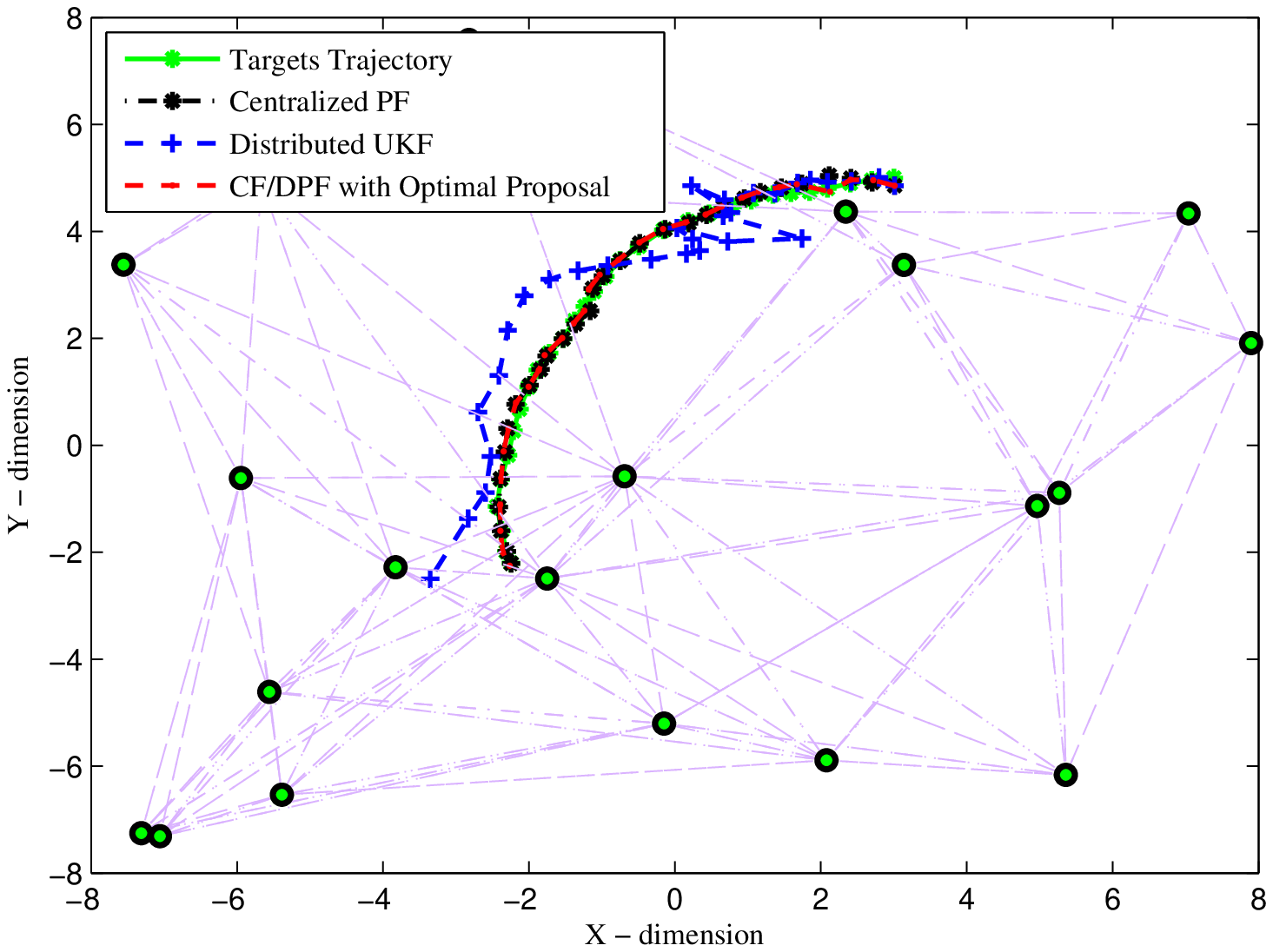}}\quad
\subfigure[]{ \includegraphics[scale=0.5]{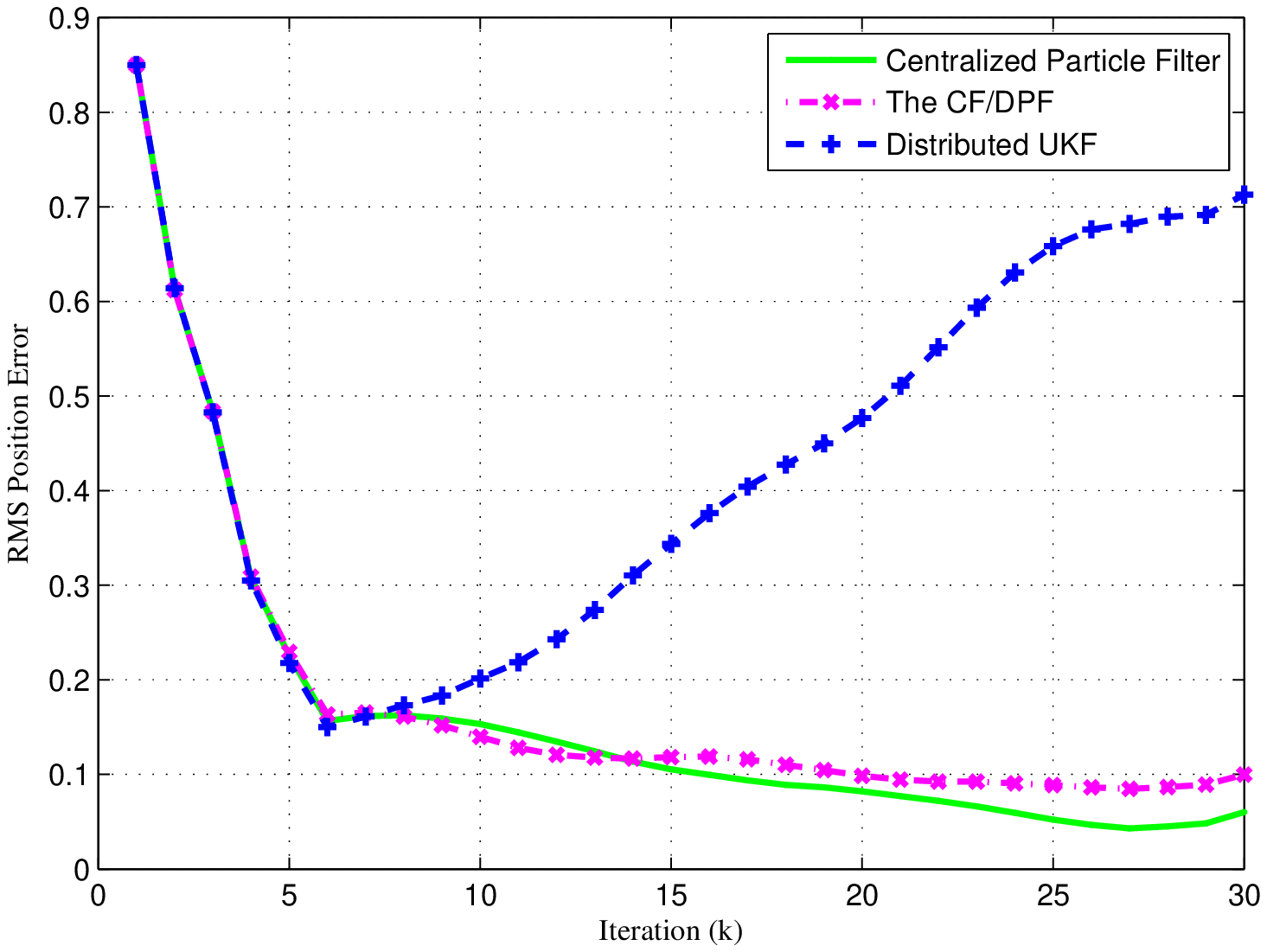}}}
\caption{(a) Robot trajectories estimated from the CF/DPF,
  centralized, and distributed UKF implementations. (b) RMS error plots
  for the three implementations. \label{robot}}
\end{figure}
%
%=================================================
\section{Conclusion} \label{sec:conclusion}
%=================================================
In this paper, we propose a multi-rate framework referred to as the CF/DPF for distributed implementation of the particle filter.
In the proposed framework, two particle filters run at each sensor node. The first filter, referred to as the local filter,
 recursively runs the particle filter based only on the local observations.
We introduce a second particle filter at each node, referred to as the fusion filter, which consistently assimilate local estimates into a global estimate by extracting new information.
Our CF/DPF implementation allows the fusion filter to run at a rate different from that of  the local filters. Achieving consensus between two successive iterations of the localized particle filter is no longer a requirement. The fusion filter and its consensus-step are now separated from the local filters, which enables the consensus step to converge without any time limitations.
Another contribution of the paper is the derivation of the optimal posterior Cram\'er-Rao lower bound (PCRLB) for the distributed architecture based on a recursive procedure involving the local Fisher information matrices (FIM) of the distributed estimators.
Numerical simulations verify the near-optimal performance of the CF/DPF.
The CF/DPF estimates follows the centralized particle filter closely  approaching the PCRLB at the signal to noise ratios that we tested.
%=================================================
\appendices
%=================================================
%=================================================
%\section{Proof of Theorem~\ref{fprotocol}}
\section{} \label{app:A}
%=================================================
\begin{proof}[Proof of Theorem~\ref{fprotocol}~\cite{Chong:1990}]
Theorem~\ref{fprotocol} is obtained using: (i) The Markovian property of the state variables; (ii) Assuming that the local observations made at two sensor nodes conditioned on the state variables are independent of each other, and; (iii) Using the Bayes' rule.
Applying the Bayes' rule to Eq.~(\ref{chap3:sec2:eq8}), the posterior distribution can be represented as follows
\begin{eqnarray} \label{eq:ext.1.}
P\big(\x(0\!:\!k)| \z(1\!:\!k) \big) \!\propto\! P\big(\z(k)| \x(k)\big)
 P\big(\x(0\!:\!k) |\z(1\!:\!k\!-\!1)\big).
\end{eqnarray}
Now, using the Markovian property of the state variables, Eq.~(\ref{eq:ext.1.}) becomes
\begin{eqnarray} \label{eq:ext.2.}
P\big(\x(0\!:\!k)| \z(1\!:\!k) \big) \!\propto\!  P\big(\z(k)| \x(k)\big) \times P\big(\x(k) |\x(k\!-\!1)\big) P\big(\x(0\!:\!k\!-\!1) |\z(1\!:\!k\!-\!1)\big).
\end{eqnarray}
Assuming that the local observations made at two sensor nodes conditioned on the state variables are independent of each other Eq.~(\ref{eq:ext.2.}) becomes
\begin{eqnarray} \label{eq:ext.3.}
P\big(\x(0\!:\!k)| \z(1\!:\!k) \big) \!\propto\! \left( \prod_{l=1}^{N} P\big(\z^{(l)}(k)| \x(k)\big) \right)\times P\big(\x(k) |\x(k\!-\!1)\big) P\big(\x(0\!:\!k\!-\!1) |\z(1\!:\!k\!-\!1)\big).
\end{eqnarray}
Using the Bays' rule, the local likelihood function $P\left(\z^{(l)}(k)| \x(k)\right)$ at node $l$, for ($1 \leq l \leq N$)~is %given by
\begin{eqnarray} \label{eq:ext.4.}
P\left(\z^{(l)}(k)| \x(k)\right) = \frac{P\left(\x(k)|\z^{(l)}(1\!:\!k)\right)}
{P\left(\x(k)|\z^{(l)}(1\!:\!k\!-\!1)\right)} P\left(\z^{(l)}(k)|\z^{(l)}(1\!:\!k\!-\!1)\right).
\end{eqnarray}
Finally, the result (Eq.~(\ref{chap3:sec2:eq8})) is provided by substituting Eq.~(\ref{eq:ext.4.}) in Eq.~(\ref{eq:ext.3.}).
\end{proof}
%=================================================
%\section{Proof of Theorem~\ref{fprotocol4}} \label{app:D}
\section{} \label{app:D}
%=================================================
\begin{proof}[Proof of Theorem~\ref{fprotocol4}]
Following the approach in the proof of Theorem~\ref{fprotocol} (Appendix~\ref{app:A}), we first write the posterior density at iteration $k\!+\!m$ as
\begin{eqnarray} \label{aapd:eq1}
\!\!P\left(\x(0\!:\!k\!+\!m)|\z(1\!:\!k\!+\!m)\right) \propto \frac{\prod_{l=1}^N P
\left( \x(k\!+\!m)|\z^{(l)}(1\!:\!k\!+\!m)\right)}{\prod_{l=1}^N P
\left( \x(k\!+\!m)|\z^{(l)}(1\!:\!k\!+\!m\!-\!1)\right)}
P\left(\x(0\!:\!k\!+\!m)|\z(1\!:\!k\!+\!m\!-\!1)\right)\!.
\end{eqnarray}
Then the last term is factorized as follows
\begin{eqnarray} \label{aapd:eq2}
P\left(\x(0\!:\!k\!+\!m)|\z(1\!:\!k\!+\!m\!-\!1)\right) = P\left(\x(k\!+\!m)|\x(k\!+\!m\!-\!1)\right) P\left(\x(0\!:\!k\!+\!m\!-\!1)|\z(1\!:\!k\!+\!m\!-\!1)\right).
\end{eqnarray}
As in Eq.~(\ref{aapd:eq1}), we continue to expand $ P(\x(0\!:\!k\!+\!m\!-\!1)|\z(1\!:\!k\!+\!m\!-\!1))$ (i.e., the posterior distribution at iteration $k\!+\!m\!-\!1$) all the way back to iteration $k\!+\!1$ to prove Eq.~(\ref{ar-new1}).
\end{proof}
%=================================================
%=================================================
%\section{Proof of Theorem~\ref{pro3}} \label{app:C}
\section{} \label{app:C}
%=================================================
\begin{proof}[Proof of Theorem~\ref{pro3}]
Based on Eq.~(\ref{chap3:sec2:eq8}), term $\log(P(\x(0\!:\!k\!+\!1)|\z(1\!:\!k\!+\!1)))$ is given by
\begin{eqnarray} \label{chap3:sec2:eq88}
&& \log P\big(\x(0\!\!:\!k\!\!+\!\!1)|\z(1\!\!:\!k\!\!+\!\!1)\big) =
\sum_{l=1}^N \log\left(P( \x(k\!\!+\!1)|\z^{(l)}(1\!\!:\!k\!\!+\!1))\right) \nonumber  - \sum_{l=1}^N \log\left(P(\x(k\!+\!1)|\z^{(l)}(1\!:\!k))\right)  \\
&&~~~+  \log\big(P\left(\x(k\!+\!1)|\x(k)\right)\big) \nonumber
+ \log\big(P\left(\x(0\!:\!k)|\z(1\!:\!k)\right)\big).
\end{eqnarray}
Expanding Eq.~(\ref{cpcrlb}) for $\bm{J}(\x(0\!:\!k\!+\!1))$, we get
\begin{eqnarray} \label{ext.crlb.11}
\bm{J}\big(\x(0\!:\!k\!+\!1)\big) =  \mathbb{E} \Bigg\{ -1 \times  \left[
\begin{array}{lll}
\!\!\!\!\Delta^{\x(0:k-1)}_{\x(0:k-1)}  \!\!&\!\! \Delta^{\x(k)}_{\x(0:k-1)} \!\!&\!\! \Delta^{\x(k+1)}_{\x(0:k-1)}\!\!\!\! \\
\!\!\!\!\Delta^{\x(0:k-1)}_{\x(k)}  \!\!&\!\! \Delta^{\x(k)}_{\x(k)} \!\!&\!\! \Delta^{\x(k+1)}_{\x(k)}\!\!\!\! \\
\!\!\!\!\Delta^{\x(0:k-1)}_{\x(k+1)}  \!\!&\!\! \Delta^{\x(k)}_{\x(k+1)} \!\!&\!\! \Delta^{\x(k+1)}_{\x(k+1)}\!\!\!\!
\end{array} \right]
\!\! \log P\big(\x(0\!:\!k\!\!+\!\!1)|\z(1\!:\!k\!+\!1)\big) \Bigg\}
\end{eqnarray}
Substituting Eq.~(\ref{chap3:sec2:eq88}) in Eq.~(\ref{ext.crlb.11}),
it can be shown that
\begin{eqnarray} \label{ext.crlb.11-2}
&&\!\!\!\!\!\!\!\!\!\!\!\!\!\!\!\!\!\!\!\bm{J}\big(\x(0\!:\!k\!+\!1)\big) \!\!=\!\!   \left[
\begin{array}{ccc}
\bm{A}^{11}(k) & \bm{A}^{12}(k) & \bm{0} \\
\bm{A}^{21}(k) & \bm{A}^{22}(k)+ \bm{D}^{11}(k)& \bm{D}^{12}(k) \\
\bm{0} & \bm{D}^{21}(k) & \bm{C}^{22}(k) \\
\end{array} \right],
\end{eqnarray}
where $\bm{A}^{11}(k)$, $\bm{A}^{12}(k)$, $\bm{A}^{21}(k)$ and
$\bm{A}^{22}(k)$ are the same as for Eq.~(\ref{cpcrlb});
$\bm{D}^{11}(k)$, $\bm{D}^{12}(k)$ and $\bm{D}^{21}(k)$ are the same
as in Eqs.~(\ref{ext.crlb.10.1})-(\ref{ext.crlb.10.2}), and;
$\bm{C}^{22}(k)$ is defined in Eq.~(\ref{ext.crlb.14.1}).  Block
$\bm{0}$ stands for a block of all zeros with the appropriate
dimension. The information sub-matrix $\bm{J}\big(\x(k\!+\!1)\big)$
can be calculated as the inverse of the right lower ($n_x\times n_x$)
sub-matrix of $\big[ \bm{J}\big(\x(0\!:\!k\!+\!1)\big) \big]^{-1}$ as
follows
\begin{eqnarray} \label{ext.crlb.12}
\bm{J}\big(\x(k\!+\!1)\big) &=& \bm{C}^{22}(k)-\big[\bm{0} \quad \bm{D}^{21}(k)\big]\times  \left[
\begin{array}{cc}
\bm{A}^{11}(k) & \bm{A}^{12}(k) \\
\bm{A}^{21}(k) & \bm{A}^{22}(k)+\bm{C}^{11}(k) \\
\end{array}
\right]^{-1}
\left[
\begin{array}{c}
\bm{0} \\
\bm{D}^{12}(k) \\
\end{array} \right] \\
& =& \bm{C}^{22}(k)-\bm{D}^{21}(k)\big( \bm{J}\big(\x(k)\big)+\bm{D}^{11}(k)\big)^{-1}\bm{D}^{12}(k), \nonumber
\end{eqnarray}
obtained from the definition of $\bm{J}\big(\x(k)\big)$ based on
Eq.~(\ref{ext.crlb.8}).
\end{proof}
%=================================================
%\section{Proof of Proposition~\ref{pro2}} \label{app:B}
\section{} \label{app:B}
%=================================================
\begin{proof}[Proof of Proposition~\ref{pro2}]
The proof of Proposition~\ref{pro2} is similar to that of
Theorem~\ref{pro3} using the Markovian property of the state variables
and based on the factorization of the joint prediction distribution
$~~~~~~~~~~~~~~~~~~~~~~~~~P\big(\x(0\!:\!k\!+\!1)|\z(1\!:\!k)\big) = P\big(\x(k\!+\!1)|\x(k)\big) P\big(\x(0\!:\!k)|\z(1\!:\!k)\big).$
\end{proof}
%=================================================
\bibliographystyle{IEEEbib}

\end{document}